\RequirePackage{amsmath}

% This is LLNCS.DOC the documentation file of
% the LaTeX2e class from Springer-Verlag
% for Lecture Notes in Computer Science, version 2.4
\documentclass{llncs}
\usepackage{llncsdoc}

\usepackage{amsmath}  % American Math Society Packete: general
\usepackage[utf8]{inputenc} % nutzt UTF8, bitte was sonst
\usepackage[main=english]{babel} % Übersetze einige Literate auf Deutsch + deutsche Trennregeln%TIKZ
\usepackage{tikz} % Für LaTeX Graphiken
\usetikzlibrary{automata,arrows,backgrounds,decorations.markings,decorations.pathmorphing,decorations.pathreplacing,fit,positioning,shadows,shapes,shapes.geometric} %,snakes} % Ein paar nützliche TikZ-Pakete
%% Abkuerzungen richtig formatieren %
\usepackage{xspace}
\usepackage[unicode,pdfmenubar,linktoc=all,hidelinks,bookmarks]{hyperref} % ermöglicht PDF-Verklinkungen
\usepackage[margin=0pt,font=small,labelfont=bf]{caption} % cus­tomise the cap­tions in float­ing en­vi­ron­men
\usepackage{subcaption} % Support for sub-captions (and sub-figures)
\usepackage{enumerate} % Erweiterte Enumerate-Umgebungen

% Algorithmen
\usepackage[plain,chapter]{algorithm}
\usepackage{algorithmic}

% algorithmOptions.tex

% Algorithmen anpassen %
\renewcommand{\algorithmicrequire}{\textbf{Input:}\@\xspace}
\renewcommand{\algorithmicensure}{\textbf{Output:}\@\xspace}
%\floatname{algorithm}{Algorithmus}
%\renewcommand{\listalgorithmname}{Algorithmenverzeichnis}

%EOF

\tikzset{snake it/.style={decorate, decoration=snake}}
\tikzset{double/.style={blue,densely dotted, very thick}}
\tikzset{inCut/.style={red, decorate, decoration={snake,amplitude=.5mm,segment length=1.5mm,post length=0mm}}}
\tikzset{inCutGray/.style={gray, decorate, decoration={snake,amplitude=.5mm,segment length=1.5mm,post length=0mm}}}
\tikzset{inCutDouble/.style={blue,densely dotted, thick, decorate, decoration={snake,amplitude=.5mm,segment length=1.5mm,post length=0mm}}}

 		%H_10
 		%H_10
	%H_11
	%H_12
	%H_13
	%H_14
 		%H_20
 		%H_30
 		%H_46

%\usetikzlibrary{external}
%\tikzexternalize[optimize=false,prefix=figures/]

% customCommands.tex
\usepackage{xparse}

\newcommand{\sutt}[1]{_{\texttt{#1}}}
\newcommand{\ALG}[2]{\texttt{\textsc{#1}}\sutt{#2}}

\newcommand{\AlgMC}[1]{\ALG{MaxCut}{#1}}

\newcommand{\MC}{\textsc{Max-Cut}}
\newcommand{\MCP}{\textsc{Max-Cut} problem\@\xspace}
\newcommand{\MCA}{\textsc{Max-Cut} algo\-rithm\@\xspace}

\newcommand{\kfplMC}[1]{\AlgMC{#1}}

\newcommand{\contract}{\texttt{\textsc{Update}}}
\newcommand{\Split}{\texttt{\textsc{Split}}}

\newcommand{\argmax}[1]{\underset{#1}{\text{argmax}}\;}

%\newmdenv[
%  roundcorner=5pt,
%%  backgroundcolor=sedorange!20,
%%  linecolor=sedblue,
%%  frametitlefont=\color{sedblue},
%%  frametitlebelowskip=0pt,
%]{mybox}
%
%\newenvironment{myquote}[1]%
%  {\list{}{\leftmargin=#1\rightmargin=#1}\item[]}%
%  {\endlist}
%
%\newcommand{\quoteBox}[1]{
%  \begin{myquote}{1.5em}
%	\begin{mybox}
%		#1
%	\end{mybox}
%  \end{myquote}
%}

 % Zeichnung
 % Anzahl Kreuzungen
 %{\Phi} % Flächen
	% Einbettung
\newcommand{\I}{X}	% Kreuzungskantenpaare
\newcommand{\kz}{\chi}	% Kreuzung
	% Kreuzungskantenpaare
%\newcommand{\wert}{\chi}	% Wert des Schnitts
\newcommand{\wert}{c}	% Wert des Schnitts PM neu

\newcommand{\nodeCut}{S}	% Wert des Schnitts
\newcommand{\nodeKCut}{\overline{\nodeCut}}	% Wert des Schnitts
\newcommand{\nodeCutExt}[1]{\nodeCut{#1}}
\newcommand{\nodeoverlineCutExt}[1]{\overline{\nodeCut{#1}}}

\newcommand{\Cut}{\delta(\nodeCut,G)}	% Schnitt (Kanten im Schnitt) PM neu
	% Wert des Schnitts PM neu
\newcommand{\CutExt}[1]{\delta(\nodeCut{#1})}

\newcommand{\etAl}{et.\nolinebreak[4]\hspace{0.125em}\nolinebreak[4]al.\@\xspace}

% Leere Seite ohne Seitennummer, naechste Seite rechts
%\newcommand{\blankpage}{
% \clearpage{\pagestyle{empty}\cleardoublepage}
%}
%EOF

\title{A Fixed-Parameter Algorithm for the Max-Cut\\ Problem on Embedded 1-Planar Graphs}

\author{Christine Dahn \and Nils M.~Kriege \and Petra Mutzel}

\institute{Department of Computer Science, TU Dortmund University, Germany\\ 
	\email{\{christine.dahn|nils.kriege|petra.mutzel\}@cs.tu-dortmund.de}}

\begin{document}
\maketitle                % typeset the title of the contribution

\begin{abstract}

We propose a fixed-parameter tractable algorithm for the \MCP on embedded 1-planar graphs 
parameterized by the crossing number $k$ of the given embedding.
A graph is called 1-planar if it can be drawn in the plane with at most one crossing per edge.
Our algorithm recursively reduces a 1-planar graph to at most $3^k$ planar graphs, using edge removal and node contraction.
The \MCP is then solved on the planar graphs using established polynomial-time algorithms. 
We show that a maximum cut in the given 1-planar graph can be derived from the solutions for the planar graphs.
Our algorithm computes a maximum cut in an embedded 1-planar graph with $n$ nodes and $k$ edge crossings in time $\mathcal{O}(3^k \cdot n^{3/2} \log n)$.

\keywords{maximum cut, fixed-parameter tractable, 1-planar graphs}

\end{abstract}

\section{Introduction}

Partitioning problems on graphs receive increasing attention in the literature. Here the task is to partition the set of nodes of a given (weighted) undirected graph so that the number (or weighted sum) of connections between the parts is minimized.
A special case is the \MCP which asks for a node partition into two sets so that the sum of the edge weights in the cut is maximised.  
The problem is getting increasing attention in the literature, since it is directly related to solving Ising spin glass models (see, e.g., Barahona \cite{Barahona1982}) which are of high interest in physics. Besides its theoretical merits, Ising spin glass models need to be solved in adiabatic quantum computation \cite{McGeoch2014}.
Other applications occur in the layout of electronic circuits \cite{BarahonaGJR88,DDJMRR1995}.

The \MCP has been shown to be NP-hard for general graphs~\cite{Karp72}.
Moreover, Papadimitriou and Yannakakis \cite{PapadimitriouY1991} have shown that the \MCP is APX-hard, i.e., there does not exist a polynomial-time approximation scheme unless P=NP.
Goemans and Williamson suggested a randomized constant factor approximation algorithm  \cite{GoemansW1995} which has been derandomized by Mahajan and Ramesh \cite{MahajanR1999} and has a  performance guarantee of 0.87856.
%PM: Achtung: das STOC Paper von Goemans+Williamson 1994 enthielt einen groben Denkfehler bei der Derandomisierung, deswegen ist das Journal Paper nur noch randomized.
There are a number of special cases for which the problem can be solved in polynomial time.
The most prominent case arises if the weights of all edges are negative, since then the problem 
can be solved via network flow.
Other special cases are graphs without long odd cycles \cite{GrotschelN84} or weakly bipartite graphs \cite{GroetschelPulleyblank1981}.
Another prominent case arises for planar input graphs. Orlova and Dorfman \cite{OrlovaD72} and Hadlock \cite{Hadlock75} have shown that the \MCP can be solved in polynomial time for unweighted planar graphs.
Their algorithms can be extended to work on weighted planar graphs (e.g., Mutzel \cite{GA1516}).
Currently, the fastest algorithms have been suggested by Shih \etAl\ \cite{ShihWuKuo90} and by Liers and Pardella \cite{LiersP12}.
These results have been extended to the class of graphs not contractible to $K_5$~\cite{Barahona83} and to toroidal graphs \cite{Barahona1983,GalluccioLoebl1998} (i.e., graphs that can be embedded on the torus).
In this paper, we show an extension of the planar special case to that of the class of 1-planar graphs.

A graph is 1-planar if it can be drawn into the plane so that every edge is crossed at most once.
While planarity testing can be done in linear time \cite{HopcroftT73}, the recognition problem for 1-planar graphs is much harder.
Korzhik and Mohar showed that 1-planarity testing is NP-hard \cite{KorzhikMohar13}. 
However, there are fixed-parameter tractable (FPT) algorithms parameterized by the cyclomatic number (the minimum number of edges that must be removed from the graph to create a forest), the tree-depth or the node cover number \cite{BannisterCE13}. For 1-planar graphs these algorithms construct a 1-planar embedding.

\paragraph{Our contribution.}
Given an embedded 1-planar graph with $k$ crossings, we suggest a fixed-parameter tractable algorithm for the \MCP with parameter $k$. The idea of our algorithm is to recursively reduce the input graph into a set of at most $3^k$ planar graphs using a series of edge removals and node contractions. The planar instances can then be solved using the polynomial time algorithms suggested in \cite{ShihWuKuo90,LiersP12} with running time $O(n^{3/2}\log n)$ for a planar graph with $n$ nodes.

The paper is organized as follows. Section \ref{se:preliminaries} contains the basic definitions concerning cuts and 1-planarity.
We also introduce the class of $k$-almost-planar graphs which have 1-planar drawings not exceeding $k$ crossings.
In Section \ref{se:algkalmost} we present our new algorithm for embedded 1-planar graphs and prove its correctness.
Our analysis of its running time shows that it is fixed-parameter tractable with parameter $k$. 
We end with a conclusion and open problems in Section \ref{se:conclusion}.

%\todo[inline]{Haben wir geklaert warum das planar max cut korrekt ist obwohl embedded? PM: Haben wir jetzt keine Zeit, spaeter}

\section{Preliminaries}
\label{se:preliminaries}

Throughout our paper, we consider undirected weighted graphs $G=(V,E,c)$ with arbitrary edge weights.
A partition of the nodes of $G$ into two sets $S \subseteq V$ and $\overline{S}=V\backslash S$
defines the \emph{cut} $\Cut=\{(uv) \in E \mid (u\in S \hbox{ and }v\in \overline{S}) \hbox{ or } (v\in S \hbox{ and }u\in \overline{S})  \}$.
The \emph{value of a cut} $\Cut$ in the graph $G$ is the sum of weights of all edges in the cut: $\wert(\Cut)= \sum_{e\in \Cut} c_{e}$. 
The \MCP searches for a cut in a given weighted graph with highest value.
For the graph class of planar graphs, the \MCP can be solved in polynomial time.

A graph is \emph{planar} if it admits a \emph{planar drawing}, i.e., a drawing on the plane without any edge crossing.
A drawing admits a \emph{rotation system} which is a clockwise-ordering of the incident edges for every node. In a planar drawing, a rotation system defines the \emph{faces}, i.e., the topologically connected regions of the plane.
One of the faces, the \emph{outer face}, is unbounded. 
A face is uniquely described by its boundary edges. 
Such a description for each face is an equivalent definition of a (planar) embedding. 
A \emph{(planar) embedding} represents the set of all planar drawings with the same faces. It can be represented by the description of the faces or by the rotation system.
It is well known that planarity testing can be solved in linear time \cite{HopcroftT73}. The same is true for computing a planar embedding.
In order to generate crossing free drawings of planar graphs, a number of various algorithms exist, e.g., the straight-line drawing algorithm by de Fraysseix \etAl~\cite{DeFraysseixPP1990}.

Planar graphs are contained in the class of 1-planar graphs. A graph is \emph{1-planar}, if it admits a \emph{1-planar drawing}, i.e., a drawing on the plane with at most one crossing per edge.
Testing 1-planarity is NP-hard \cite{KorzhikMohar13} even in the case of bounded treewidth or bandwith \cite{BannisterCE13}.
A \emph{1-planar embedding} defines the faces of a given 1-planar drawing and can be represented by the set of crossings $X$ and a list of edges and edge segments (half edges) for the crossings for each face.
Note that a 1-planar embedding uniquely defines a rotation system for the nodes. 
However, the opposite is not true. In general, a rotation system does not allow for computing the crossings efficiently or a 1-planar embedding.
Auer et al. \cite{AuerBGR15} have shown that testing 1-planarity of a graph with a fixed rotation system is NP-hard even if the graph is 3-connected.

We call a 1-planar graph \emph{$k$-almost-planar} if it admits a 1-planar drawing with at most $k$ edge crossings.
For edge removal and node contraction we use the following notation: $G - e = (V,E\setminus\{e\})$ denotes the graph obtained from $G=(V,E)$ by deleting the edge $e\in E$. $G/xy$ contracts the two nodes $x$ and $y$ into a new node $v_{xy}\notin V$. In doing so, the edges leading to $x$ or $y$ are replaced by a new edge to $v_{xy}$. Multi-edges to $v_{xy}$ are contracted to one edge and their edge weights are added, self-loops are deleted.
We denote the inverse operation of contraction by $\Split$. 
The contraction and $\Split$ operation can be applied to a subset of nodes $S\subseteq V$:
\begin{align*}
S/xy = & \begin{cases}
S\setminus\{x,y\} \cup \{v_{xy}\} & \text{ if } x,y \in S\\
S &  \text{ otherwise} 
\end{cases}\\
\Split(S,v_{xy}) = & \begin{cases}
S\setminus\{v_{xy}\} \cup \{x,y\} & \text{ if } v_{xy} \in S\\
S & \text{ otherwise}
\end{cases}
\end{align*}

\section{Max-Cut for embedded 1-planar graphs}
\label{se:algkalmost}

Our main idea for computing the maximum cut in an embedded 1-planar graph $G$ is to eliminate its $k$ crossings and then use a \MCA for planar graphs on the resulting planar graph. In order to remove a crossing, we need to know the two crossing edges of each crossing. 
We use two methods to remove a crossing: Either by deleting one of the crossing edges, or by contracting two nodes that do not belong to the same crossing edge.

\subsection{Removing the crossings}

In this section let $G=(V,E,c)$ be a $k$-almost-planar graph with a 1-planar embedding $(\Pi,\I)$ and a set of crossing edges $\I$ with $\vert \I \vert = k$.
A crossing is defined by a pair of crossing edges, e.g., let $\kz = \{e_{vy}, e_{wz}\} \in \I$ be an arbitrary crossing. 
The following lemma shows that specific node contractions (and edge deletions) remove at least one crossing and do not introduce new crossings.
Figures~\ref{fig:rm_cr_4} and~\ref{fig:rm_cr_3} show examples of node contraction and Figure \ref{fig:rm_cr_2} shows an example of edge deletion.

\begin{lemma}
	\label{lem:k-1-al-pl}
	Let $G$ be a $k$-almost-planar graph with 1-planar embedding $(\Pi,\I)$ and $\kz = \{e_{vy}, e_{wz}\} \in \I$ be an arbitrary crossing.
	The graphs $G/ab$, $G - e_{vy}$ and $G - e_{wz}$ are $(k-1)$-almost-planar for $ab \in \{vw,vz,wy,yz\}$.
	The set of crossings in the resulting 1-planar embedding is a proper subset of $\I$.
\end{lemma}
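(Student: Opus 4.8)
The plan is to argue about the drawing directly, showing that the two modifications to the embedding — contracting a pair $ab \in \{vw, vz, wy, yz\}$, or deleting one of the two crossing edges — can be carried out so that the crossing $\kz$ disappears and no new crossing is created. I will fix the given 1-planar drawing realizing the embedding $(\Pi, \I)$, and zoom in on the crossing point $p$ of $e_{vy}$ and $e_{wz}$. Around $p$ the four edge segments emanate in cyclic order $v, w, y, z$ (this is exactly the combinatorial data recorded by the embedding), so in particular each of the pairs $vw$, $wy$, $yz$, $zv$ consists of two segments that are \emph{consecutive} around $p$.

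For the deletion cases $G - e_{vy}$ and $G - e_{wz}$: removing one of the two crossing edges from the drawing obviously removes the crossing $p$ and introduces nothing, so the restriction of the drawing to the remaining edges is a 1-planar drawing whose set of crossings is $\I \setminus \{\kz\}$, witnessing $(k-1)$-almost-planarity; since $|\I\setminus\{\kz\}| = k-1$, this is in fact a proper subset of $\I$, giving the last sentence. I would also note that other crossings are untouched: an edge crossed only by $e_{vy}$ (resp.\ $e_{wz}$) now becomes uncrossed, which is still permissible.

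For the contraction cases, take $ab = vw$ (the other three are symmetric, since the four segments around $p$ play symmetric roles). Because the segments from $v$ and from $w$ are consecutive around the crossing point $p$, there is a small disk $D$ around $p$ meeting only these four segments, and inside $D$ the arcs toward $v$ and toward $w$ bound a wedge containing no other part of the drawing. I reroute: pull the endpoint-vicinity of the $v$-side of edge $e_{vy}$ and the $w$-side of edge $e_{wz}$ along this wedge so that $v$ and $w$ travel toward $p$ and are identified at the new node $v_{vw}$, and then the two edges $e_{vy}$ and $e_{wz}$ leave $v_{vw}$ without crossing each other (they now share an endpoint). All other edges incident to $v$ or $w$ are rerouted along thin tubes following the former arcs $v\!\to\!p$ and $w\!\to\!p$; since these tubes stay inside a narrow neighborhood of arcs that were crossing-free near $p$, they create no crossings. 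Multi-edges and self-loops produced by the contraction are merged or deleted as in the definition of $G/ab$, which can only remove crossings. Hence $G/vw$ has a 1-planar drawing in which $\kz$ is gone and every other crossing of $\I$ is preserved at most once, so its crossing set is a proper subset of $\I$.

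The main obstacle — and the point that needs the most care — is verifying that the rerouting in the contraction case genuinely introduces no new crossing: one must check that the bundle of edges formerly incident to $v$ can be routed alongside the former $v\!\to\!p$ segment of $e_{vy}$ without colliding with the bundle formerly incident to $w$ (routed alongside the $w\!\to\!p$ segment of $e_{wz}$), and that neither bundle hits the edges $e_{vy}, e_{wz}$ themselves after they have been redirected through $p$. This is exactly where consecutiveness of the $v$- and $w$-segments around $p$ is used: the wedge between them is empty, so the two bundles can be kept on opposite sides of the (now merged) path through $p$. A clean way to present this is to observe that contracting $v$ and $w$ "through the crossing" is topologically the same as first sliding $v$ along $e_{vy}$ up to $p$ and $w$ along $e_{wz}$ up to $p$ — operations that are plainly crossing-preserving — and then identifying the two coincident points; the figures referenced (Figures~\ref{fig:rm_cr_4} and~\ref{fig:rm_cr_3}) illustrate precisely this local picture, so the formal argument can be kept to the neighborhood of $p$.
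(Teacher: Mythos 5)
Your proposal is correct and follows essentially the same route as the paper's proof: delete one crossing edge to trivially remove the crossing, or contract a pair of endpoints of different crossing edges by sliding both nodes along their crossing-free half edges to the crossing point, placing the merged node there, and routing all other incident edges alongside those half edges so no new crossings arise. Your version merely spells out in more detail (consecutiveness of the two segments around the crossing point, the empty wedge, the thin tubes) why the rerouting is collision-free, which the paper asserts more briefly via the half-edge property of the 1-planar embedding.
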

\begin{proof}
	Since the contracted nodes $a$ and $b$ are each an endpoint to one of the crossing edges, the contracted node is an endpoint to both edges. Since $e_{vy}$ and $e_{wz}$ now have a common endpoint, they can be drawn without a crossing. Therefore the crossing $\kz$ is removed. 
	The contraction does not create new crossings because the two nodes $a$ and $b$ can be moved along their half edges towards the crossing. 
	This is possible because we have a 1-planar embedding which has the property that every crossing is incident to two half edges connecting it with its endpoints. 
	The new node $v_{ab}$ is then placed where the crossing used to be. All other edges can be extended to the new node along the way of the same half edges without creating new crossings. Multi-edges are merged into a single edge and self-loops are deleted.
	In $G - e_{vy}$ and $G - e_{wz}$, the crossing $\kz$ is removed by deleting one of its crossing edges.
	Obviously this does not lead to new crossings.
	So in both cases the number of crossings decreases.
\qed
\end{proof}

\begin{figure}[tb]
	\centering
	\subcaptionbox{$H$
		\label{fig:rm_cr_0}}[.49\linewidth]
	{\resizebox{.44\linewidth}{!}{\begin{tikzpicture}
	\node[draw,circle,minimum size=0.45cm, inner sep=0ex] (a) at (0,0) {\small $a$};
	\node[draw,circle,minimum size=0.45cm, inner sep=0ex] (b) at (2,0) {\small $b$};
	\node[draw,circle,minimum size=0.45cm, inner sep=0ex] (c) at (2,2) {\small $c$};
	\node[draw,circle,minimum size=0.45cm, inner sep=0ex] (d) at (0,2) {\small $d$};
	\node[draw,circle,minimum size=0.45cm, inner sep=0ex] (e) at (4,0) {\small $e$};
	\node[draw,circle,minimum size=0.45cm, inner sep=0ex] (f) at (4,2) {\small $f$};

	\node[draw,circle,minimum size=0.45cm, inner sep=0ex] (ab) at (1,0) {\small $t$};
	\node[draw,circle,minimum size=0.45cm, inner sep=0ex] (cd) at (1,2) {\small $s$};
	\node[draw,circle,minimum size=0.45cm, inner sep=0ex] (be) at (3,0) {\small $y$};
	\node[draw,circle,minimum size=0.45cm, inner sep=0ex] (cf) at (3,2) {\small $x$};

	\draw (a) -- (ab);
	\draw (ab) -- (b);
	\draw (a) -- (c);
	\draw (a) -- (d);
	\draw (b) -- (c);
	\draw (b) -- (d);
	\draw (c) -- (cd);
	\draw (cd) -- (d);
	\draw (b) -- (be);
	\draw (be) -- (e);
	\draw (b) -- (f);
	\draw (c) -- (e);
	\draw (c) -- (cf);
	\draw (cf) -- (f);
	\draw (e) -- (f);

	\draw[-] (ab) to [out=-140,in=-90,looseness=1.5] (-1,1);
	\draw[-] (cd) to [out=140,in=90,looseness=1.5] (-1,1);
	\draw[-] (be) to [out=-50,in=-90,looseness=1.5] (5,1);
	\draw[-] (cf) to [out=50,in=90,looseness=1.5] (5,1);
\end{tikzpicture}}}
	\subcaptionbox{$H/bc$
		\label{fig:rm_cr_4}}[.49\linewidth]
	{\resizebox{.44\linewidth}{!}{\begin{tikzpicture}
	\node[draw,circle,minimum size=0.45cm, inner sep=0ex] (a) at (0,0) {\small $a$};
	\node[draw,circle,minimum size=0.45cm, inner sep=0ex] (vbc) at (1,1) {\small $v_{bc}$};
	\node[draw,circle,minimum size=0.45cm, inner sep=0ex]  (d) at (0,2) {\small $d$};
	\node[draw,circle,minimum size=0.45cm, inner sep=0ex] (e) at (4,0) {\small $e$};
	\node[draw,circle,minimum size=0.45cm, inner sep=0ex]  (f) at (4,2) {\small $f$};

	\node[draw,circle,minimum size=0.45cm, inner sep=0ex]  (ab) at (1,0) {\small $t$};
	\node[draw,circle,minimum size=0.45cm, inner sep=0ex] (cd) at (1,2) {\small $s$};
	\node[draw,circle,minimum size=0.45cm, inner sep=0ex]  (be) at (3,0) {\small $y$};
	\node[draw,circle,minimum size=0.45cm, inner sep=0ex] (cf) at (3,2) {\small $x$};

	\draw (a) -- (ab);
	\draw (ab) -- (vbc);
	\draw (a) -- (vbc);
	\draw (a) -- (d);
	\draw (vbc) -- (d);
	\draw (vbc) -- (cd);
	\draw (cd) -- (d);
	\draw (vbc) -- (2,0)-- (be);
	\draw (be) -- (e);
	\draw (vbc) -- (3,1.25) -- (f);
	\draw (vbc) -- (3,0.75) -- (e);
	\draw (vbc) -- (2,2)-- (cf);
	\draw (cf) -- (f);
	\draw (e) -- (f);

	\draw[-,] (ab) to [out=-140,in=-90,looseness=1.5] (-1,1) to [out=90,in=140,looseness=1.5] (cd);
	\draw[-,] (be) to [out=-50,in=-90,looseness=1.5] (5,1) to [out=90,in=50,looseness=1.5] (cf);
\end{tikzpicture}
\vspace*{-6pt}}}
	\subcaptionbox{$H/cd$
		\label{fig:rm_cr_3}}[.49\linewidth]
	{\resizebox{.44\linewidth}{!}{\begin{tikzpicture}
	\node[draw,circle,minimum size=0.45cm, inner sep=0ex] (a) at (0,0) {\small $a$};
	\node[draw,circle,minimum size=0.45cm, inner sep=0ex] (b) at (2,0) {\small $b$};
	\node[draw,circle,minimum size=0.45cm, inner sep=0ex]  (vcd) at (1,1) {\small $v_{cd}$};
	\node[draw,circle,minimum size=0.45cm, inner sep=0ex] (e) at (4,0) {\small $e$};
	\node[draw,circle,minimum size=0.45cm, inner sep=0ex]  (f) at (4,2) {\small $f$};

	\node[draw,circle,minimum size=0.45cm, inner sep=0ex]  (ab) at (1,0) {\small $t$};
	\node[draw,circle,minimum size=0.45cm, inner sep=0ex] (cd) at (1,2) {\small $s$};
	\node[draw,circle,minimum size=0.45cm, inner sep=0ex]  (be) at (3,0) {\small $y$};
	\node[draw,circle,minimum size=0.45cm, inner sep=0ex] (cf) at (3,2) {\small $x$};

	\draw (a) -- (ab);
	\draw (ab) -- (b);
%	\draw (a) -- (vcd);
	\draw[very thick,blue,dashed] (a) -- (vcd);
%	\draw (b) -- (vcd);
	\draw[very thick,blue,dashed] (b) -- (vcd);
%	\draw (vcd) -- (cd);
	\draw[very thick,blue,dashed] (vcd) -- (cd);
	\draw (b) -- (be);
	\draw (be) -- (e);
	\draw (b) -- (f);
	\draw (vcd) -- (2.25,1.75) -- (e);
	\draw (vcd) -- (2,2) -- (cf);
	\draw (cf) -- (f);
	\draw (e) -- (f);

%	\draw[-] (ab) to [out=-120,in=-45] (-.6,-.3) to [out=135,in=-140] (cd);
	\draw[-] (ab) to [out=-140,in=-90,looseness=1.5] (-1,1) to [out=90,in=140,looseness=1.5] (cd);
	\draw[-] (be) to [out=-50,in=-90,looseness=1.5] (5,1) to [out=90,in=50,looseness=1.5] (cf);
\end{tikzpicture}
\vspace*{-6pt}}}
	\subcaptionbox{$H - e_{bd}$
		\label{fig:rm_cr_2}}[.49\linewidth]
	{\resizebox{.44\linewidth}{!}{\begin{tikzpicture}
	\node[draw,circle,minimum size=0.45cm, inner sep=0ex] (a) at (0,0) {\small $a$};
	\node[draw,circle,minimum size=0.45cm, inner sep=0ex] (b) at (2,0) {\small $b$};
	\node[draw,circle,minimum size=0.45cm, inner sep=0ex]  (c) at (2,2) {\small $c$};
	\node[draw,circle,minimum size=0.45cm, inner sep=0ex]  (d) at (0,2) {\small $d$};
	\node[draw,circle,minimum size=0.45cm, inner sep=0ex] (e) at (4,0) {\small $e$};
	\node[draw,circle,minimum size=0.45cm, inner sep=0ex]  (f) at (4,2) {\small $f$};

	\node[draw,circle,minimum size=0.45cm, inner sep=0ex]  (ab) at (1,0) {\small $t$};
	\node[draw,circle,minimum size=0.45cm, inner sep=0ex] (cd) at (1,2) {\small $s$};
	\node[draw,circle,minimum size=0.45cm, inner sep=0ex]  (be) at (3,0) {\small $y$};
	\node[draw,circle,minimum size=0.45cm, inner sep=0ex] (cf) at (3,2) {\small $x$};

	\draw (a) -- (ab);
	\draw (ab) -- (b);
	\draw (a) -- (c);
	\draw (a) -- (d);
	\draw (b) -- (c);
%	\draw (b) -- (d);
	\draw (c) -- (cd);
	\draw (cd) -- (d);
	\draw (b) -- (be);
	\draw (be) -- (e);
	\draw (b) -- (f);
	\draw (c) -- (e);
	\draw (c) -- (cf);
	\draw (cf) -- (f);
	\draw (e) -- (f);

	\draw[-] (ab) to [out=-140,in=-90,looseness=1.5] (-1,1) to [out=90,in=140,looseness=1.5] (cd);
	\draw[-] (be) to [out=-50,in=-90,looseness=1.5] (5,1) to [out=90,in=50,looseness=1.5] (cf);
\end{tikzpicture}
\vspace*{-6pt}}}
	\caption{An example how a crossing can be removed. (Blue dashed edges are merged edges from $H$.)}
	\label{fig:rm_cr}
\end{figure}
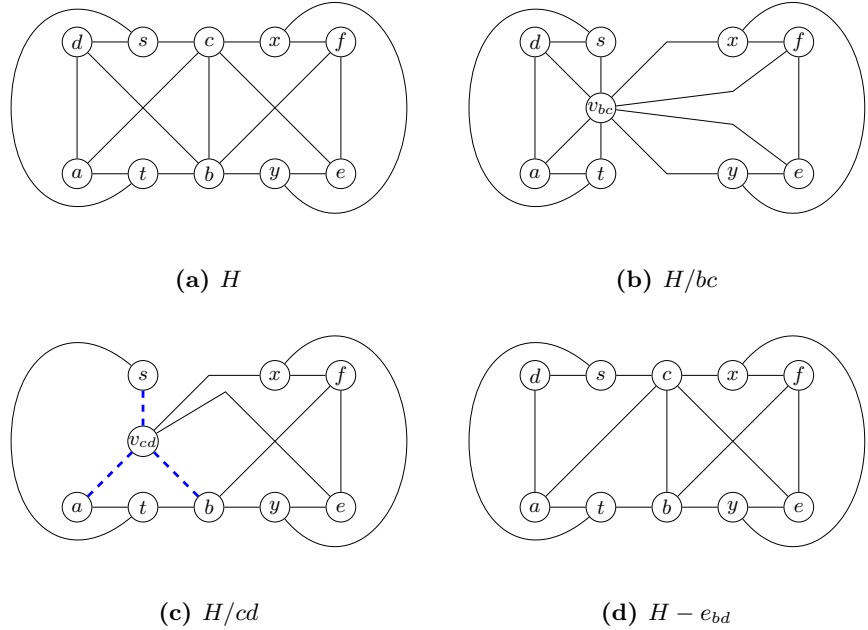

The recursive application of Lemma \ref{lem:k-1-al-pl} shows that all crossings can be removed with these two operations. Thus after $k$ contraction or removal operations, the resulting graph is planar and a planar \MCA can be applied to compute a maximum cut.
The following lemma shows how to project a cut in $G/xy$ or $G - e$ back onto $G$.

\begin{lemma}
\label{lem:cut-transfer-to-G}\ 
Let $G=(V,E,c)$ be a weighted graph, $x\not= y\in V$ and $S\subseteq V$.
\begin{enumerate}[(i)]
	\item Let $\CutExt{,G/xy}$ be a cut in $G/xy$, then the cut $\delta(\Split(\nodeCut,v_{xy}),G)$ in $G$ has the same value. 
	\label{lem:cut-transfer-to-G_contraction}
	\item If $x$ and $y$ are in the same set ($x,y \in \nodeCut$ or $x,y \in \nodeKCut$) then $\Cut = \CutExt{,G - e_{xy}}$ for $e_{xy}\in E$.
	\label{lem:cut-transfer-to-G_removal}
	\end{enumerate}
\end{lemma}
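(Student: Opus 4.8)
The plan is to handle the two parts independently, since they differ considerably in difficulty. Part~(\ref{lem:cut-transfer-to-G_removal}) I would dispose of first, and essentially in one line: if $x$ and $y$ lie in the same set of the partition, then the edge $e_{xy}$ has both endpoints on the same side and is therefore not a cut edge, so $\delta(S,G)$ contains no edge lying outside $E\setminus\{e_{xy}\}$; since every other edge appears with the same endpoints in both $G$ and $G-e_{xy}$, the two cuts are literally the \emph{same} set of edges and hence have the same value.

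For part~(\ref{lem:cut-transfer-to-G_contraction}) I would first fix notation: set $G'=G/xy$ and $S'=\Split(S,v_{xy})$, and recall precisely how the edges of $G'$ arise from those of $G$ --- for each node $u\notin\{x,y\}$, the bundle of parallel edges joining $u$ to $x$ or to $y$ is replaced by a single edge $e_u=v_{xy}u$ whose weight is the sum of the weights in that bundle; a possible edge $e_{xy}$ becomes a self-loop and is deleted; and every edge avoiding $\{x,y\}$ is left untouched, with the same endpoints and weight. The key observation is that $\Split$ always places $x$ and $y$ on the \emph{same} side of the partition in $G$: both in $\overline{S'}$ when $v_{xy}\notin S$, and both in $S'$ when $v_{xy}\in S$. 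I would then split into these two symmetric cases and, in each, compare the cuts edge by edge: edges avoiding $\{x,y\}$ contribute identically by construction; within each bundle, all edges are cut simultaneously, and exactly when the merged edge $e_u$ is cut in $G'$ (namely, when $u$ lies on the opposite side from $v_{xy}$), so their total weight equals $c_{e_u}$; and a deleted self-loop $e_{xy}$ contributes nothing on either side because its endpoints are on the same side in $G$. Summing the contributions shows that $\delta(S,G')$ and $\delta(S',G)$ have the same value.

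The one point that needs genuine care --- and the main, if mild, obstacle --- is precisely this multi-edge/self-loop bookkeeping in part~(\ref{lem:cut-transfer-to-G_contraction}): I have to verify that the contraction convention ``merge parallel edges by adding their weights, delete self-loops'' is exactly compensated by the fact that $x$ and $y$ end up together after $\Split$, so that no weight is lost, created, or double-counted. Everything else is a routine edge-by-edge comparison. Finally, I would note that, as the hypotheses already indicate, no planarity or embedding property is used anywhere --- the statement holds verbatim for arbitrary weighted graphs.
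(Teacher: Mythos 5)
Your proposal is correct and follows essentially the same route as the paper's own proof: project the cut via $\Split$, compare edge by edge (merged parallel edges versus their preimages, with weights added under contraction), and observe that the possibly restored edge $e_{xy}$ cannot contribute because $x$ and $y$ land on the same side after splitting, while part~(ii) is immediate since $e_{xy}$ lies in neither cut. Your write-up merely makes the multi-edge/self-loop bookkeeping more explicit than the paper does.
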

\begin{proof}
	\textit{(i)} Let $\nodeCut$ define a cut in $G/xy$. If the contracted node $v_{xy}$ is split, the cut is projected from $G/xy$ to $G$. The corresponding set of nodes in $G$ is $\nodeCutExt{'}=\Split(\nodeCut,v_{xy})$. It defines a cut in $G$. If $v_{xy} \notin \nodeCut$ then $\nodeCut = \nodeCutExt{'}$.
	The weight of an edge $e \in \CutExt{,G/xy}$ in $G/xy$ is either the same as the weight of the corresponding edge $e' \in \CutExt{',G}$ in $G$ or it is split between two edges $e', e'' \in \CutExt{',G}$ in $G$. 
	The only edge that might exist in $G$ but not in $G/xy$ is $e_{xy}$.
	Since $x$ and $y$ were contracted in $G/xy$, they are either both in $\nodeCutExt{'}$ or both in $\nodeoverlineCutExt{'}$. 
	Therefore the only edge that could be added in $G$ by splitting $v_{xy}$ can not add to the value of $\CutExt{',G}$ in $G$.
	So no weights are lost or added due to the projection and the two cuts have the same value.\\
	\textit{(ii)} This is obvious because $e_{xy}$ is in neither of the two cuts.
\qed 
\end{proof}

\subsection{The Max-Cut~Algorithm}
\label{suse:algo}
We use the three operations introduced above to successively remove all 
crossings of a 1-planar graph. All planar instances obtained in this way are
then solved by a \MC~algorithm for planar graphs. From the solutions of the 
planar graphs, we construct a solution for the original graph.
Note that the algorithm only needs the graph $G$ and the set of edge crossings $\I$ as input.
However, the 1-planar embedding is needed to show the correctness of the algorithm.

Algorithm \ref{algo:MaxCutkfpl} realizes this approach with a recursive function,
which is initially called with the input graph $G$ and the set of crossings $\I$
present in its embedding.
As the algorithm progresses, the graph is successively modified and the set of
crossings is adjusted according to the modifications applied.
If the graph $G$ passed as parameter to the function is planar ($X=\emptyset$), then a planar \MCA is called (line~\ref{algo:MaxCutkfpl:plMC}). 
If there are still crossings remaining, an arbitrary crossing is selected and 
removed in three different ways: 
\begin{algorithm}[tb]
	
\newcommand{\kfplMCb}{\texttt{MaxCut}}
\underline{$\kfplMCb(G,\I)$}\vspace{0.5em}\\
\algorithmicrequire \ An undirected weighted 1-planar graph $G$ and a set of crossing edges $\I$ in a 1-planar embedding of $G$. \\
\algorithmicensure \ A set $\nodeCut \subseteq V_G$ defining a maximum cut $\Cut \subseteq E_G$ in $G$.

\begin{algorithmic}[1]
\IF{$\I = \emptyset$}
\STATE $\nodeCut \gets \kfplMC{planar}(G)$  									\label{algo:MaxCutkfpl:plMC}
\ELSE
\STATE choose an element $\kz\gets\{e_{vy}, e_{wz}\} \in \I$
\STATE $\nodeCutExt{_1} \gets \kfplMCb(G/wy, \contract(\I,w,y))$				\label{algo:MaxCutkfpl:f1}
\STATE $\nodeCutExt{_2} \gets \kfplMCb(G/yz, \contract(\I,y,z))$				\label{algo:MaxCutkfpl:f2}
\STATE $\nodeCutExt{_3} \gets \kfplMCb(G-e_{wz},\I \setminus \{\kz\})$			\label{algo:MaxCutkfpl:f3}
\STATE $G_1 \gets G/wy,$ $G_2 \gets G/yz,$ $G_3 \gets G-e_{wz}$
\STATE $j \gets \argmax{1\leq i\leq 3} \wert(\CutExt{_i,G_i})$					\label{algo:MaxCutkfpl:maxGew}
\IF{$j = 1$}																	\label{algo:MaxCutkfpl:convert_S}
\STATE $\nodeCut \gets \Split(\nodeCutExt{_1},v_{wy})$							\label{algo:MaxCutkfpl:KS1}
\ELSIF{$j = 2$}
\STATE $\nodeCut \gets \Split(\nodeCutExt{_2},v_{yz})$							\label{algo:MaxCutkfpl:KS2}	
\ELSE
\STATE $\nodeCut \gets \nodeCutExt{_3}$
\ENDIF																			\label{algo:MaxCutkfpl:convert_E}
\ENDIF
\STATE \textbf{return} $\nodeCut$												\label{algo:MaxCutkfpl:return}
\end{algorithmic}
\caption{\MCA for embedded 1-planar graphs}

	\label{algo:MaxCutkfpl}
\end{algorithm}
Let $y$ be an arbitrary endpoint of one crossing edge and $e_{wz}$, $w \neq y$,
$z \neq y$, the other crossing edge, then (i) the nodes $y$ and $w$ are contracted, 
(ii) the nodes $y$ and $z$ are contracted, and (iii) the edge $e_{wz}$ is deleted.
Each operation removes at least the selected crossing, but in case (i) and (ii) 
also other crossing may be affected. Therefore, the set of crossings $\I$ is 
adjusted by the function $\contract$.
If two nodes $w,y$ are contracted, $\contract(\I,w,y)$ removes every crossing in $\I$ which was dissolved by contracting $w$ and $y$, and replaces every appearance of $w$ or $y$ in $\I$ with the contracted node $v_{wy}$. To check if a crossing was dissolved, $\contract$ checks if $w$ and $y$ are both part of the crossing. Since every crossing needs to be checked once, $\contract$ has a linear running time.
For each case, the recursive function is called with the modified graph and the set of crossings
as a parameter (lines~\ref{algo:MaxCutkfpl:f1}-\ref{algo:MaxCutkfpl:f3}).
Each call returns a node set defining a maximum cut in the modified instance.
The cut with maximal value is then projected back to $G$. If the maximum cut
is obtained in a graph with contracted nodes, i.e., case (i) or (ii), then the
original nodes are restored by the function $\Split(\nodeCut,v_{wy})$, which 
replaces $v_{wy}$ with $w$ and $y$ if $\nodeCut$ contains the contracted node.
This cut-defining set is then returned as the solution to the subproblem.

\begin{figure}[tb]
	\centering
	\subcaptionbox{\MC\ in $H/bc$
		\label{fig:kfplMC_4}}[.49\linewidth]
	{\resizebox{.44\linewidth}{!}{\begin{tikzpicture}
	\node[draw,circle,minimum size=0.45cm, inner sep=0ex, fill=orange!60] (a) at (0,0) {\small $a$};
	\node[draw,circle,minimum size=0.45cm, inner sep=0ex, fill=orange!60] (vbc) at (1,1) {\small $v_{bc}$};
	\node[draw,circle,minimum size=0.45cm, inner sep=0ex, fill=green!30]  (d) at (0,2) {\small $d$};
	\node[draw,circle,minimum size=0.45cm, inner sep=0ex, fill=orange!60] (e) at (4,0) {\small $e$};
	\node[draw,circle,minimum size=0.45cm, inner sep=0ex, fill=green!30]  (f) at (4,2) {\small $f$};

	\node[draw,circle,minimum size=0.45cm, inner sep=0ex, fill=green!30]  (ab) at (1,0) {\small $t$};
	\node[draw,circle,minimum size=0.45cm, inner sep=0ex, fill=orange!60] (cd) at (1,2) {\small $s$};
	\node[draw,circle,minimum size=0.45cm, inner sep=0ex, fill=green!30]  (be) at (3,0) {\small $y$};
	\node[draw,circle,minimum size=0.45cm, inner sep=0ex, fill=orange!60] (cf) at (3,2) {\small $x$};

	\draw[inCut] (a) -- (ab);
	\draw[inCut] (ab) -- (vbc);
	\draw[dashed] (a) -- (vbc);
	\draw[inCut] (a) -- (d);
	\draw[inCut] (vbc) -- (d);
	\draw[dashed] (vbc) -- (cd);
	\draw[inCut] (cd) -- (d);
	\draw[inCut] (vbc) -- (2,0)-- (be);
	\draw[inCut] (be) -- (e);
	\draw[inCut] (vbc) -- (3,1.25) -- (f);
	\draw[dashed] (vbc) -- (3,0.75) -- (e);
	\draw[dashed] (vbc) -- (2,2)-- (cf);
	\draw[inCut] (cf) -- (f);
	\draw[inCut] (e) -- (f);

	\draw[-,inCut] (ab) to [out=-140,in=-90,looseness=1.5] (-1,1) to [out=90,in=140,looseness=1.5] (cd);
	\draw[-,inCut] (be) to [out=-50,in=-90,looseness=1.5] (5,1) to [out=90,in=50,looseness=1.5] (cf);
\end{tikzpicture}
\vspace*{-6pt}}}
	\subcaptionbox{\MC\ in $H/cd$
		\label{fig:kfplMC_3}}[.49\linewidth]
	{\resizebox{.44\linewidth}{!}{\begin{tikzpicture}
	\node[draw,circle,minimum size=0.45cm, inner sep=0ex, fill=orange!60] (a) at (0,0) {\small $a$};
	\node[draw,circle,minimum size=0.45cm, inner sep=0ex, fill=orange!60] (b) at (2,0) {\small $b$};
	\node[draw,circle,minimum size=0.45cm, inner sep=0ex, fill=green!30]  (vcd) at (1,1) {\small $v_{cd}$};
	\node[draw,circle,minimum size=0.45cm, inner sep=0ex, fill=orange!60] (e) at (4,0) {\small $e$};
	\node[draw,circle,minimum size=0.45cm, inner sep=0ex, fill=green!30]  (f) at (4,2) {\small $f$};

	\node[draw,circle,minimum size=0.45cm, inner sep=0ex, fill=green!30]  (ab) at (1,0) {\small $t$};
	\node[draw,circle,minimum size=0.45cm, inner sep=0ex, fill=orange!60] (cd) at (1,2) {\small $s$};
	\node[draw,circle,minimum size=0.45cm, inner sep=0ex, fill=green!30]  (be) at (3,0) {\small $y$};
	\node[draw,circle,minimum size=0.45cm, inner sep=0ex, fill=orange!60] (cf) at (3,2) {\small $x$};

	\draw[inCut] (a) -- (ab);
	\draw[inCut] (ab) -- (b);
%	\draw[inCut] (a) -- (vcd);
	\draw[inCutDouble] (a) -- (vcd);
%	\draw[inCut] (b) -- (vcd);
	\draw[inCutDouble] (b) -- (vcd);
%	\draw[inCut] (vcd) -- (cd);
	\draw[inCutDouble] (vcd) -- (cd);
	\draw[inCut] (b) -- (be);
	\draw[inCut] (be) -- (e);
	\draw[inCut] (b) -- (f);
	\draw[inCut] (vcd) -- (2.25,1.75) -- (e);
	\draw[inCut] (vcd) -- (2,2) -- (cf);
	\draw[inCut] (cf) -- (f);
	\draw[inCut] (e) -- (f);

%	\draw[-,inCut] (ab) to [out=-120,in=-45] (-.6,-.3) to [out=135,in=-140] (cd);
	\draw[-,inCut] (ab) to [out=-140,in=-90,looseness=1.5] (-1,1) to [out=90,in=140,looseness=1.5] (cd);
	\draw[-,inCut] (be) to [out=-50,in=-90,looseness=1.5] (5,1) to [out=90,in=50,looseness=1.5] (cf);
\end{tikzpicture}
\vspace*{-6pt}}}
	\subcaptionbox{\MC\ in $H - e_{12}$
		\label{fig:kfplMC_2}}[.49\linewidth]
	{\resizebox{.44\linewidth}{!}{\begin{tikzpicture}
	\node[draw,circle,minimum size=0.45cm, inner sep=0ex, fill=orange!60] (a) at (0,0) {\small $a$};
	\node[draw,circle,minimum size=0.45cm, inner sep=0ex, fill=orange!60] (b) at (2,0) {\small $b$};
	\node[draw,circle,minimum size=0.45cm, inner sep=0ex, fill=green!30]  (c) at (2,2) {\small $c$};
	\node[draw,circle,minimum size=0.45cm, inner sep=0ex, fill=green!30]  (d) at (0,2) {\small $d$};
	\node[draw,circle,minimum size=0.45cm, inner sep=0ex, fill=orange!60] (e) at (4,0) {\small $e$};
	\node[draw,circle,minimum size=0.45cm, inner sep=0ex, fill=green!30]  (f) at (4,2) {\small $f$};

	\node[draw,circle,minimum size=0.45cm, inner sep=0ex, fill=green!30]  (ab) at (1,0) {\small $t$};
	\node[draw,circle,minimum size=0.45cm, inner sep=0ex, fill=orange!60] (cd) at (1,2) {\small $s$};
	\node[draw,circle,minimum size=0.45cm, inner sep=0ex, fill=green!30]  (be) at (3,0) {\small $y$};
	\node[draw,circle,minimum size=0.45cm, inner sep=0ex, fill=orange!60] (cf) at (3,2) {\small $x$};

	\draw[inCut] (a) -- (ab);
	\draw[inCut] (ab) -- (b);
	\draw[inCut] (a) -- (c);
	\draw[inCut] (a) -- (d);
	\draw[inCut] (b) -- (c);
%	\draw[inCut] (b) -- (d);
	\draw[inCut] (c) -- (cd);
	\draw[inCut] (cd) -- (d);
	\draw[inCut] (b) -- (be);
	\draw[inCut] (be) -- (e);
	\draw[inCut] (b) -- (f);
	\draw[inCut] (c) -- (e);
	\draw[inCut] (c) -- (cf);
	\draw[inCut] (cf) -- (f);
	\draw[inCut] (e) -- (f);

	\draw[-,inCut] (ab) to [out=-140,in=-90,looseness=1.5] (-1,1) to [out=90,in=140,looseness=1.5] (cd);
	\draw[-,inCut] (be) to [out=-50,in=-90,looseness=1.5] (5,1) to [out=90,in=50,looseness=1.5] (cf);
\end{tikzpicture}
\vspace*{-6pt}}}
	\subcaptionbox{\MC\ in $H$
		\label{fig:kfplMC_MC}}[.49\linewidth]
	{\resizebox{.44\linewidth}{!}{\begin{tikzpicture}
	\node[draw,circle,minimum size=0.45cm, inner sep=0ex, fill=orange!60] (a) at (0,0) {\small $a$};
	\node[draw,circle,minimum size=0.45cm, inner sep=0ex, fill=orange!60] (b) at (2,0) {\small $b$};
	\node[draw,circle,minimum size=0.45cm, inner sep=0ex, fill=green!30]  (c) at (2,2) {\small $c$};
	\node[draw,circle,minimum size=0.45cm, inner sep=0ex, fill=green!30]  (d) at (0,2) {\small $d$};
	\node[draw,circle,minimum size=0.45cm, inner sep=0ex, fill=orange!60] (e) at (4,0) {\small $e$};
	\node[draw,circle,minimum size=0.45cm, inner sep=0ex, fill=green!30]  (f) at (4,2) {\small $f$};

	\node[draw,circle,minimum size=0.45cm, inner sep=0ex, fill=green!30]  (ab) at (1,0) {\small $t$};
	\node[draw,circle,minimum size=0.45cm, inner sep=0ex, fill=orange!60] (cd) at (1,2) {\small $s$};
	\node[draw,circle,minimum size=0.45cm, inner sep=0ex, fill=green!30]  (be) at (3,0) {\small $y$};
	\node[draw,circle,minimum size=0.45cm, inner sep=0ex, fill=orange!60] (cf) at (3,2) {\small $x$};

	\draw[inCut] (a) -- (ab);
	\draw[inCut] (ab) -- (b);
	\draw[inCut] (a) -- (c);
	\draw[inCut] (a) -- (d);
	\draw[inCut] (b) -- (c);
	\draw[inCut] (b) -- (d);
	\draw[inCut] (c) -- (cd);
	\draw[inCut] (cd) -- (d);
	\draw[inCut] (b) -- (be);
	\draw[inCut] (be) -- (e);
	\draw[inCut] (b) -- (f);
	\draw[inCut] (c) -- (e);
	\draw[inCut] (c) -- (cf);
	\draw[inCut] (cf) -- (f);
	\draw[inCut] (e) -- (f);

	\draw[-,inCut] (ab) to [out=-140,in=-90,looseness=1.5] (-1,1) to [out=90,in=140,looseness=1.5] (cd);
	\draw[-,inCut] (be) to [out=-50,in=-90,looseness=1.5] (5,1) to [out=90,in=50,looseness=1.5] (cf);
\end{tikzpicture}
\vspace*{-6pt}}}
	\caption{An example how the algorithm calculates a \MC~in an embedded 2-almost-planar graph. (Blue dotted edges were merged and have weight 2; all other edges have weight 1; curvy edges belong to the cut; black dashed edges do not belong to the cut.)}
	\label{fig:bsp5}
\end{figure}
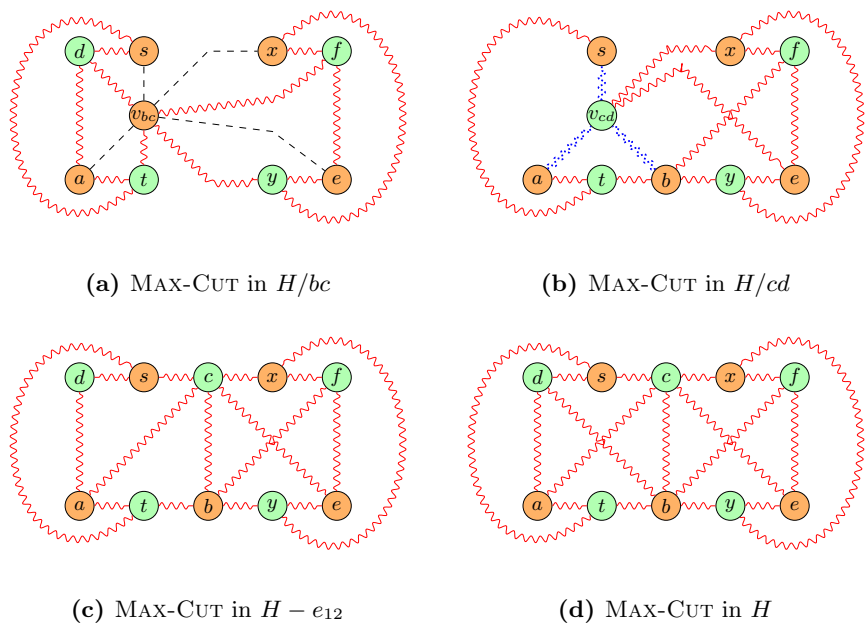
\begin{example}
	\label{bsp:bsp5}
	Given the 2-almost planar graph $H$ in Figure \ref{fig:rm_cr_0} with uniform edge weights, e.g. 1. The algorithm removes the left crossing in the three described ways. The resulting graphs are shown in Figure \ref{fig:rm_cr_4}-\ref{fig:rm_cr_2}. The recursively calculated cuts of these graphs are depicted in Figure \ref{fig:kfplMC_4}-\ref{fig:kfplMC_2} with \ref{fig:kfplMC_3} being the largest cut. This cut is transferred back to $H$ by splitting the contracted node $v_{cd}$. The resulting cut is shown in Figure \ref{fig:kfplMC_MC}. It is a maximum cut in $H$.
\end{example}

\subsection{Correctness}
\label{suse:ana}

The four endpoints of a crossing can be partitioned in eight non-isomorphic ways,
cf.\@ Figure \ref{fig:8sepOptions}: 
(a) all endpoints in one set,
(b)/(c)/(d)/(e) three endpoints in one set without $v$/$w$/$y$/$z$,
(f)/(g) the two endpoints of different crossing edges in the same sets, or
(h) the two endpoints of the same crossing edges in one set each. 
For arbitrary graphs, the induced cut is different because non-crossing edges might be replaced with a path or might not exist at all.

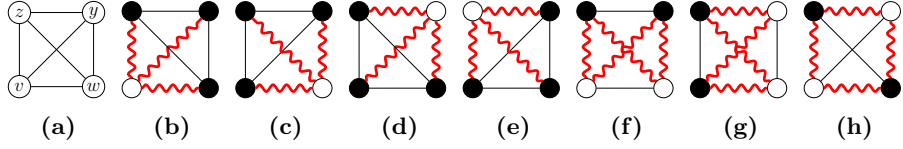
\begin{figure}[tb]
	\centering
	\subcaptionbox{\label{fig:8sepOption0}}[.115\linewidth]
	{\resizebox{.11\linewidth}{!}{\begin{tikzpicture}
	\node[draw,circle,minimum size=0.6cm, inner sep=0ex] (a) at (0,0) {\Large $v$};
	\node[draw,circle,minimum size=0.6cm, inner sep=0ex] (b) at (2,0) {\Large $w$};
	\node[draw,circle,minimum size=0.6cm, inner sep=0ex] (c) at (2,2) {\Large $y$};
	\node[draw,circle,minimum size=0.6cm, inner sep=0ex] (d) at (0,2) {\Large $z$};
	
	\draw (a) -- (b);
	\draw (a) -- (c);
	\draw (a) -- (d);
	\draw (b) -- (c);
	\draw (b) -- (d);
	\draw (c) -- (d);
\end{tikzpicture}}}
	\subcaptionbox{\label{fig:8sepOption1a}}[.115\linewidth]
	{\resizebox{.11\linewidth}{!}{\begin{tikzpicture}
	\node[draw,circle,minimum size=0.5cm, inner sep=0ex] (a) at (0,0) {};
	\node[draw,fill,circle,minimum size=0.5cm, inner sep=0ex] (b) at (2,0) {};
	\node[draw,fill,circle,minimum size=0.5cm, inner sep=0ex] (c) at (2,2) {};
	\node[draw,fill,circle,minimum size=0.5cm, inner sep=0ex] (d) at (0,2) {};
	
	\draw[red,line width=2pt,snake it] (a) -- (b);
	\draw[red,line width=2pt,snake it] (a) -- (c);
	\draw[red,line width=2pt,snake it] (a) -- (d);
	\draw (b) -- (c);
	\draw (b) -- (d);
	\draw (c) -- (d);
\end{tikzpicture}}}
	\subcaptionbox{\label{fig:8sepOption1b}}[.115\linewidth]
	{\resizebox{.11\linewidth}{!}{\begin{tikzpicture}
	\node[draw,fill,circle,minimum size=0.5cm, inner sep=0ex] (a) at (0,0) {};
	\node[draw,circle,minimum size=0.5cm, inner sep=0ex] (b) at (2,0) {};
	\node[draw,fill,circle,minimum size=0.5cm, inner sep=0ex] (c) at (2,2) {};
	\node[draw,fill,circle,minimum size=0.5cm, inner sep=0ex] (d) at (0,2) {};
	
	\draw[red,line width=2pt,snake it] (a) -- (b);
	\draw (a) -- (c);
	\draw (a) -- (d);
	\draw[red,line width=2pt,snake it] (b) -- (c);
	\draw[red,line width=2pt,snake it] (b) -- (d);
	\draw (c) -- (d);
\end{tikzpicture}}}
	\subcaptionbox{\label{fig:8sepOption1c}}[.115\linewidth]
	{\resizebox{.11\linewidth}{!}{\begin{tikzpicture}
	\node[draw,fill,circle,minimum size=0.5cm, inner sep=0ex] (a) at (0,0) {};
	\node[draw,fill,circle,minimum size=0.5cm, inner sep=0ex] (b) at (2,0) {};
	\node[draw,circle,minimum size=0.5cm, inner sep=0ex] (c) at (2,2) {};
	\node[draw,fill,circle,minimum size=0.5cm, inner sep=0ex] (d) at (0,2) {};

	\draw (a) -- (b);
	\draw[red,line width=2pt,snake it] (a) -- (c);
	\draw (a) -- (d);
	\draw[red,line width=2pt,snake it] (b) -- (c);
	\draw (b) -- (d);
	\draw[red,line width=2pt,snake it] (c) -- (d);
\end{tikzpicture}}}
	\subcaptionbox{\label{fig:8sepOption1d}}[.115\linewidth]
	{\resizebox{.11\linewidth}{!}{\begin{tikzpicture}
	\node[draw,fill,circle,minimum size=0.5cm, inner sep=0ex] (a) at (0,0) {};
	\node[draw,fill,circle,minimum size=0.5cm, inner sep=0ex] (b) at (2,0) {};
	\node[draw,fill,circle,minimum size=0.5cm, inner sep=0ex] (c) at (2,2) {};
	\node[draw,circle,minimum size=0.5cm, inner sep=0ex] (d) at (0,2) {};

	\draw (a) -- (b);
	\draw (a) -- (c);
	\draw[red,line width=2pt,snake it] (a) -- (d);
	\draw (b) -- (c);
	\draw[red,line width=2pt,snake it] (b) -- (d);
	\draw[red,line width=2pt,snake it] (c) -- (d);
\end{tikzpicture}}}
	\subcaptionbox{\label{fig:8sepOption2h}}[.115\linewidth]
	{\resizebox{.11\linewidth}{!}{\begin{tikzpicture}
	\node[draw,circle,minimum size=0.5cm, inner sep=0ex] (a) at (0,0) {};
	\node[draw,circle,minimum size=0.5cm, inner sep=0ex] (b) at (2,0) {};
	\node[draw,fill,circle,minimum size=0.5cm, inner sep=0ex] (c) at (2,2) {};
	\node[draw,fill,circle,minimum size=0.5cm, inner sep=0ex] (d) at (0,2) {};

	\draw (a) -- (b);
	\draw[red,line width=2pt,snake it] (a) -- (c);
	\draw[red,line width=2pt,snake it] (a) -- (d);
	\draw[red,line width=2pt,snake it] (b) -- (c);
	\draw[red,line width=2pt,snake it] (b) -- (d);
	\draw (c) -- (d);
\end{tikzpicture}}}
	\subcaptionbox{\label{fig:8sepOption2v}}[.115\linewidth]
	{\resizebox{.11\linewidth}{!}{\begin{tikzpicture}
	\node[draw,fill,circle,minimum size=0.5cm, inner sep=0ex] (a) at (0,0) {};
	\node[draw,circle,minimum size=0.5cm, inner sep=0ex] (b) at (2,0) {};
	\node[draw,circle,minimum size=0.5cm, inner sep=0ex] (c) at (2,2) {};
	\node[draw,fill,circle,minimum size=0.5cm, inner sep=0ex] (d) at (0,2) {};

	\draw[red,line width=2pt,snake it] (a) -- (b);
	\draw[red,line width=2pt,snake it] (a) -- (c);
	\draw (a) -- (d);
	\draw (b) -- (c);
	\draw[red,line width=2pt,snake it] (b) -- (d);
	\draw[red,line width=2pt,snake it] (c) -- (d);
\end{tikzpicture}}}
	\subcaptionbox{\label{fig:8sepOption2d}}[.115\linewidth]
	{\resizebox{.11\linewidth}{!}{\begin{tikzpicture}
	\node[draw,circle,minimum size=0.5cm, inner sep=0ex] (a) at (0,0) {};
	\node[draw,fill,circle,minimum size=0.5cm, inner sep=0ex] (b) at (2,0) {};
	\node[draw,circle,minimum size=0.5cm, inner sep=0ex] (c) at (2,2) {};
	\node[draw,fill,circle,minimum size=0.5cm, inner sep=0ex] (d) at (0,2) {};

	\draw[red,line width=2pt,snake it] (a) -- (b);
	\draw (a) -- (c);
	\draw[red,line width=2pt,snake it] (a) -- (d);
	\draw[red,line width=2pt,snake it] (b) -- (c);
	\draw (b) -- (d);
	\draw[red,line width=2pt,snake it] (c) -- (d);
\end{tikzpicture}}}
	\caption{
		The 8 non-isomorphic partitions of the four endpoints of a crossing.
		(The red and curvy edges belong to the cut that is defined by the corresponding partition on the $K_4$.)}
	\label{fig:8sepOptions}
\end{figure}

\begin{lemma}
\label{lem:cut-transfer-from-G}	
Let $G=(V,E,c)$ be a 1-planar graph with a 1-planar embedding $(\Pi,\I)$, $S\subseteq V$, and $\kz = \{e_{vy}, e_{wz}\} \in \I$ be an arbitrary crossing.
	\begin{enumerate}[(i)]
		\item If a cut $\Cut$ in $G$ separates the four endpoints of $\kz$ as shown in Fig. \ref{fig:8sepOptions} (a), (b), (c) or (f) then $\nodeCutExt{_2} = \nodeCutExt{ / yz}$ defines a cut in $G/yz$ with the same value.  If $\Cut$ is maximal in $G$ so is $\CutExt{_2, G/yz}$ in $G/yz$.
		\label{lem:cut-transfer-from-G_contraction_yz}
		\item If a cut $\Cut$ in $G$ separates the four endpoints of $\kz$ as shown in Fig. \ref{fig:8sepOptions} (a), (b), (e) or (g) then $\nodeCutExt{_1} = \nodeCutExt{ / wy}$ defines a cut in $G/wy$ with the same value. If $\Cut$ is maximal in $G$ so is $\CutExt{_1, G/wy}$ in $G/wy$.
		\label{lem:cut-transfer-from-G_contraction_wy}
		\item If a cut $\Cut$ in $G$ separates the four endpoints of $\kz$ as shown in Fig. \ref{fig:8sepOptions} (a), (b), (d) or (h) then $\nodeCutExt{_3} = \nodeCut$ defines a cut in $G - e_{wz}$ with the same value. If $\Cut$ is maximal in $G$, so is $\CutExt{_3, G - e_{wz}}$ in $G - e_{wz}$.
		\label{lem:cut-transfer-from-G_removal_wz}
	\end{enumerate}
\end{lemma}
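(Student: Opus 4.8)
The plan is to handle all three items by a common template. The one ingredient that must be checked by inspection is the combinatorial fact behind the case lists: reading off Figure~\ref{fig:8sepOptions}, the partitions (a), (b), (c), (f) of item~(i) are exactly those in which $y$ and $z$ lie on the same side of the cut, the partitions (a), (b), (e), (g) of item~(ii) are exactly those with $w$ and $y$ on the same side, and the partitions (a), (b), (d), (h) of item~(iii) are exactly those with $w$ and $z$ on the same side. This is the pigeonhole statement about the three nodes $w,y,z$ (in any bipartition two of them coincide), together with reading off from the cut $K_4$-edges in each picture which of the three coincidences occurs; the same fact also explains why the three reductions used by Algorithm~\ref{algo:MaxCutkfpl} reach every one of the eight partition types. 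Granting it, the rest is an application of Lemma~\ref{lem:cut-transfer-to-G}.

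For item~(i): since $y$ and $z$ are on the same side of $\Cut$, the set $\nodeCut/yz$ is $\nodeCut$ with $\{y,z\}$ replaced by $v_{yz}$ when $y,z\in\nodeCut$, and is $\nodeCut$ itself when $y,z\notin\nodeCut$; either way $\Split(\nodeCut/yz,v_{yz})=\nodeCut$. Applying part~(i) of Lemma~\ref{lem:cut-transfer-to-G} to $G/yz$ with cut-defining set $\nodeCut/yz$ then gives $\wert(\delta(\nodeCut/yz,G/yz))=\wert(\delta(\Split(\nodeCut/yz,v_{yz}),G))=\wert(\Cut)$, which is the first assertion with $\nodeCutExt{_2}=\nodeCut/yz$. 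For maximality, the same lemma says that every cut of $G/yz$ lifts to a cut of $G$ of the same value, so the maximum cut value of $G/yz$ is at most that of $G$; since $\CutExt{_2,G/yz}$ has value $\wert(\Cut)$, which by hypothesis is the maximum cut value of $G$, it is a maximum cut of $G/yz$. Item~(ii) is the same argument with $y,z$ replaced by $w,y$ throughout, so that $G/yz$ becomes $G/wy$.

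Item~(iii) starts identically: $w$ and $z$ lie on the same side of $\Cut$, hence $e_{wz}\notin\Cut$, and part~(ii) of Lemma~\ref{lem:cut-transfer-to-G} gives $\Cut=\CutExt{_3,G-e_{wz}}$ with $\nodeCutExt{_3}=\nodeCut$, so equal value. The maximality assertion, though, is where I expect the real difficulty. To rerun the argument above it would suffice that re-inserting $e_{wz}$ never decreases the value of any cut of $G-e_{wz}$ (then the maximum cut value of $G-e_{wz}$ is at most that of $G$, and one concludes as in item~(i)); re-insertion adds $\wert_{e_{wz}}$ exactly to the cuts that separate $w$ and $z$, so this is true when $\wert_{e_{wz}}\ge 0$ but may fail when $\wert_{e_{wz}}<0$ — already a triangle on $w,z,a$ with $\wert_{wa}>0$, $\wert_{za}=-\wert_{wa}$ and $\wert_{wz}$ strongly negative has its maximum cut keeping $w,z$ together, while $G-e_{wz}$ admits a strictly larger maximum cut that separates them. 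So the crux of item~(iii) is the interplay of negative edge weights with edge deletion: the assertion is clean for $\wert_{e_{wz}}\ge 0$, and more generally whenever a maximum cut of $G-e_{wz}$ can be chosen to keep $w$ and $z$ together, but covering the remaining case needs something extra — a sign restriction on $\wert_{e_{wz}}$, or a reduction that pins $w$ and $z$ to the same side — and this is the step I would expect to demand the most work.
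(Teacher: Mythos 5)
Your treatment of items (i) and (ii), and of the value-preservation half of item (iii), is exactly the paper's: the case lists of Figure~\ref{fig:8sepOptions} are read off as ``$y,z$ on the same side'', ``$w,y$ on the same side'' and ``$w,z$ on the same side'' respectively, the contracted (resp.\ deleted) edge therefore contributes nothing to $\delta(S,G)$, and for (i)/(ii) maximality follows because, by Lemma~\ref{lem:cut-transfer-to-G}(i), \emph{every} cut of $G/yz$ (resp.\ $G/wy$) lifts to a cut of $G$ of equal value, so contraction cannot increase the maximum cut value. Up to this point the proposal and the paper coincide.

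Your hesitation over the maximality claim in item (iii) is not a failure to find the intended argument --- it is a correct diagnosis of a flaw in it. The paper argues: if $G-e_{wz}$ had a cut $\delta(S',G-e_{wz})$ larger than $\delta(S,G-e_{wz})$, then ``$\delta(S',G)$ would be a cut in $G$ as well (Lemma~\ref{lem:cut-transfer-to-G}(ii))'', contradicting maximality of $\delta(S,G)$. But Lemma~\ref{lem:cut-transfer-to-G}(ii) equates the two cut values only when $w$ and $z$ lie on the same side of $S'$, and a competing cut of $G-e_{wz}$ is free to separate them; if it does, its value in $G$ changes by $c_{e_{wz}}$, and for $c_{e_{wz}}<0$ no contradiction results. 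Your triangle example, dressed up as a one-crossing instance (take $V=\{v,w,y,z,a\}$ with crossing pair $c_{e_{vy}}=0$, $c_{e_{wz}}=-10$ and further edges $c_{e_{wa}}=1$, $c_{e_{za}}=-1$), refutes statement (iii) as written: every maximum cut of $G$ has value $0$ and keeps $v,w,y,z$ together (case (a)), yet the maximum cut of $G-e_{wz}$ has value $1$ and necessarily separates $w$ from $z$. Worse, on this instance Algorithm~\ref{algo:MaxCutkfpl} selects branch $j=3$ and returns a cut of value $-9$ in $G$, so the gap propagates to Theorem~\ref{theo:opt}, whose proof invokes the same step. The ``something extra'' you ask for is genuinely required: $c_{e_{wz}}\ge 0$ makes both the lemma and the paper's one-line argument sound, but for arbitrary weights the deletion branch would have to be replaced by an operation that actually pins $w$ and $z$ to the same side.
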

\begin{proof}
	\textit{(i)} Let $\nodeCut$ define a cut in $G$ that separates the endpoints of $\kz$ as shown in Fig. \ref{fig:8sepOptions} (a), (b), (c) or (f).
	By contracting $y$ and $z$, the set of nodes is projected to $G/yz$ and $\CutExt{_2, G/yz}$ is a cut in $G/yz$. 
	The only edge that might have been removed in $G/yz$ does not add to the value of $\Cut$ in $G$ because $y$ and $z$ are not separated by the cut (Fig. \ref{fig:8sepOptions} (a), (b), (c) or (f)). 
	Therefore, the two cuts have the same value in both graphs.
	Let $\nodeCut$ define a maximum cut in $G$ (with the required property). If there was a cut $\CutExt{',G/yz}$ in $G/yz$ larger than $\CutExt{_2, G/yz}$, then $\Split(\nodeCutExt{'},v_{yz})$ would define a cut in $G$ with the same value as $\CutExt{',G}$ (Lemma \ref{lem:cut-transfer-to-G} \ref{lem:cut-transfer-to-G_contraction}), contradicting that $\Cut$ is maximal in $G$.\\
	\textit{(ii)} The proof of the second proposition is analogous to the proof of the first.\\
	\textit{(iii)} Let $\nodeCut$ define a cut in $G$ that separates the endpoints of $\kz$ as shown in Fig. \ref{fig:8sepOptions} (a), (b), (d) or (h).
	Since $G$ and $G - e_{wz}$ have the same set of nodes, $\CutExt{_3,G - e_{wz}}$ is a cut in $G - e_{wz}$ as well.
	We know that $w$ and $z$ are not separated by the cut (Fig. \ref{fig:8sepOptions} (a), (b), (d) or (h)). 
	Therefore the only edge that was removed in $G - e_{wz}$ does not add to the value of the cut in $G$ and the cut has the same value in both graphs.
	Let $\nodeCut$ define a maximum cut in $G$ (with the required property). If there was a cut $\CutExt{',G - e_{wz}}$ in $G - e_{wz}$ larger than $\CutExt{_3,G - e_{wz}}$, then $\CutExt{',G}$ would be a cut in $G$ as well (Lemma \ref{lem:cut-transfer-to-G} \ref{lem:cut-transfer-to-G_removal}), contradicting that $\Cut$ is maximal in $G$.
	\qed
\end{proof}

\begin{theorem}
	\label{theo:opt}
	Algorithm \ref{algo:MaxCutkfpl} computes a maximum cut in a 1-planar graph $G$, given a set of crossing edges $X$ in a 1-planar embedding of $G$.
\end{theorem}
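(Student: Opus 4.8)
The plan is to argue by induction on the number $k = \vert \I \vert$ of crossings recorded in the 1-planar embedding accompanying the input. For the base case $k=0$ we have $\I = \emptyset$, so $G$ is planar and the algorithm returns the output of a planar \MCA (line~\ref{algo:MaxCutkfpl:plMC}), which by assumption is a set $\nodeCut$ defining a maximum cut $\Cut$ in $G$.

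For the inductive step let $k \ge 1$ and assume the algorithm is correct on every instance whose embedding has strictly fewer than $k$ crossings. Let $\kz = \{e_{vy}, e_{wz}\}$ be the crossing chosen by the algorithm and put $G_1 = G/wy$, $G_2 = G/yz$, $G_3 = G - e_{wz}$. By Lemma~\ref{lem:k-1-al-pl} each $G_i$ is $(k-1)$-almost-planar, and the crossing set passed to the $i$-th recursive call — obtained by $\contract$ for $i\in\{1,2\}$ and as $\I\setminus\{\kz\}$ for $i=3$ — is exactly the crossing set of a 1-planar embedding of $G_i$, which has fewer than $k$ crossings. Hence the induction hypothesis applies to the recursive calls in lines~\ref{algo:MaxCutkfpl:f1}--\ref{algo:MaxCutkfpl:f3}: each returns a set $\nodeCutExt{_i}$ defining a maximum cut $\CutExt{_i,G_i}$ in $G_i$. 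By Lemma~\ref{lem:cut-transfer-to-G}, applying the matching $\Split$ (lines~\ref{algo:MaxCutkfpl:KS1}, \ref{algo:MaxCutkfpl:KS2}, or the identity when $i=3$) turns $\nodeCutExt{_i}$ into a node set defining a cut in $G$ of the same value $\wert(\CutExt{_i,G_i})$. Writing $c^\ast$ for the value of a maximum cut of $G$ and $j$ for the index of largest $\wert(\CutExt{_i,G_i})$ selected in line~\ref{algo:MaxCutkfpl:maxGew}, this shows the set returned in line~\ref{algo:MaxCutkfpl:return} defines a valid cut in $G$ of value $\wert(\CutExt{_j,G_j}) \le c^\ast$.

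For the reverse inequality, fix a maximum cut $\Cut$ of $G$ given by $\nodeCut \subseteq V$. Its restriction to the four endpoints $v,w,y,z$ of $\kz$ is one of the eight non-isomorphic partitions (a)--(h) of Figure~\ref{fig:8sepOptions}. The favourable cases of Lemma~\ref{lem:cut-transfer-from-G} are (a), (b), (c), (f) for $G_2 = G/yz$; (a), (b), (e), (g) for $G_1 = G/wy$; and (a), (b), (d), (h) for $G_3 = G - e_{wz}$. These three lists together exhaust (a)--(h), so for at least one index $i$ the corresponding part of Lemma~\ref{lem:cut-transfer-from-G} yields a maximum cut of $G_i$ of value $\wert(\Cut) = c^\ast$; by the induction hypothesis this equals $\wert(\CutExt{_i,G_i})$. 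Hence $\wert(\CutExt{_j,G_j}) \ge \wert(\CutExt{_i,G_i}) = c^\ast$, and combined with the previous bound $\wert(\CutExt{_j,G_j}) = c^\ast$. Thus the set returned in line~\ref{algo:MaxCutkfpl:return} defines a maximum cut of $G$, which closes the induction.

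The step that requires care is the exhaustiveness claim used in the optimality part: one must verify that the three favourable partition classes of Lemma~\ref{lem:cut-transfer-from-G} really cover all eight types in Figure~\ref{fig:8sepOptions}, and that the endpoint labelling used by the algorithm ($y$ as the shared endpoint of the two contractions $G/wy$, $G/yz$, and $e_{wz}$ as the deleted edge) matches the hypotheses of Lemmas~\ref{lem:cut-transfer-to-G} and~\ref{lem:cut-transfer-from-G}. Everything else is a mechanical combination of the induction hypothesis with those two lemmas.
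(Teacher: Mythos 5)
Your proof is correct and follows essentially the same route as the paper: induction on $k$, with Lemma~\ref{lem:cut-transfer-to-G} guaranteeing that the projected cut is a valid cut of $G$ of unchanged value, and the case analysis over the eight partitions of Figure~\ref{fig:8sepOptions} via Lemma~\ref{lem:cut-transfer-from-G} showing that at least one subproblem attains the optimum of $G$. Your exhaustiveness check is right — the three lists $(a),(b),(c),(f)$, $(a),(b),(e),(g)$ and $(a),(b),(d),(h)$ do cover all eight cases — and your explicit two-sided inequality is, if anything, slightly more careful than the paper's phrasing.
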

\begin{proof}
	We prove its optimality by induction over $k$. 
	For $k=0$, the given graph is planar. Thus the \MCA for planar graphs calculates a node set defining a maximum cut in $G$.
	Let $\nodeCutExt{^*}$ define a maximum cut in $G$. We show that the cut $\Cut$ defined by the calculated node set $\nodeCut$ is not smaller than $\CutExt{^*,G}$.
	Let $G_1 = G/wy, G_2 = G/yz$ and $G_3 = G - e_{wz}$ be the $(k-1)$-almost-planar graphs (Lemma \ref{lem:k-1-al-pl}) whose cuts $\CutExt{_1, G_1}, \CutExt{_2, G_2}$ and $\CutExt{_3, G_3}$ are calculated recursively by the algorithm. 
	There are 8 possible ways for $\nodeCutExt{^*}$ to separate the four endpoints of $\kz$. These are shown in Fig. \ref{fig:8sepOptions} (a)--(h).
	If the endpoints of $\kz$ are separated as shown in (a), (b), (e) or (g), then $\CutExt{^*, G}$ has the same value as a maximum cut $\CutExt{_1^*, G_1}$ in $G_1$ (Lemma \ref{lem:cut-transfer-from-G} \ref{lem:cut-transfer-from-G_contraction_wy}). Due to the induction hypothesis, $\CutExt{_1, G_1}$ is not smaller than $\CutExt{_1^*, G_1}$. 
	If the endpoints of $\kz$ are separated as shown in (c) or (f), then $\CutExt{^*, G}$ has the same value as a maximum cut $\CutExt{_2^*, G_2}$ in $G_2$ (Lemma \ref{lem:cut-transfer-from-G} \ref{lem:cut-transfer-from-G_contraction_yz}). Due to the induction hypothesis, $\CutExt{_2, G_2}$ is not smaller than $\CutExt{_2^*, G_2}$. 
	If the endpoints of $\kz$ are separated as shown in (d) or (h), then $\CutExt{^*, G}$ has the same value as a maximum cut $\CutExt{_3^*, G_3}$ in $G_3$ (Lemma~\ref{lem:cut-transfer-from-G}~\ref{lem:cut-transfer-from-G_removal_wz}). Due to the induction hypothesis, $\CutExt{_3, G_3}$ is not smaller than $\CutExt{_3^*, G_3}$. 
	The algorithm chooses the node set defining the largest of these three cuts (line \ref{algo:MaxCutkfpl:maxGew}--\ref{algo:MaxCutkfpl:convert_E}) and projects it back to $G$ without changing its value (Lemma \ref{lem:cut-transfer-to-G}). Thus the calculated cut $\Cut$ is not smaller than $\CutExt{_1, G_1}, \CutExt{_2, G_2}$ and $\CutExt{_3, G_3}$.
\qed 
\end{proof}

\subsection{Running time}
\label{suse:time}

Let $n$ be the number of nodes and $m$ be the number of edges of a given graph. 
It is well known that a 1-planar graph has at most $4n-8$ edges \cite{PachT97}. 
For an arbitrary 1-planar drawing, the number of crossings is bounded by $\frac{m}{2}$, since every edge can be crossed at most once and every crossing needs two edges. With the previous observation, we can establish a bound depending on the number of nodes: $k \leq 2n-4$. 

\begin{theorem}
	\label{theo:time_calc_mc_kap}
	Algorithm \ref{algo:MaxCutkfpl} computes a maximum cut in an embedded 1-planar graph with $n$ nodes and $k$ crossings in time $\mathcal{O}(3^k \cdot T_{p}(n))$, where $T_{p}(n)$ is the running time of a planar \MCA. Using the algorithms suggested in \cite{LiersPardella09} or \cite{ShihWuKuo90}, the running time is $\mathcal{O}(3^k \cdot n^{3/2}  \log n)$.
\end{theorem}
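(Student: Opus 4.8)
The plan is to leave correctness aside (it is Theorem~\ref{theo:opt}) and simply analyse the recursion tree of Algorithm~\ref{algo:MaxCutkfpl}, charging a polynomial amount of work to every node. First I would bound the number of recursive calls. By Lemma~\ref{lem:k-1-al-pl}, whenever $\I\neq\emptyset$ each of the three recursive calls receives an instance whose set of crossings is a \emph{proper} subset of the current one, so the number of crossings strictly decreases along every root-to-leaf path. Hence a call at depth $d$ has at most $k-d$ crossings, leaves occur at depth at most $k$, the branching factor is $3$, and the total number of calls is at most $\sum_{d=0}^{k}3^{d}=\mathcal{O}(3^{k})$ (with at most $3^{k}$ of them leaves).

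Next I would control the size of the instances. Contraction removes one node and edge deletion removes none, so every graph passed to a recursive call has at most $n$ nodes; since each such graph is still $1$-planar (Lemma~\ref{lem:k-1-al-pl}), it has at most $4n-8=\mathcal{O}(n)$ edges, and the planar leaf instances have $\mathcal{O}(n)$ edges as well. Together with the bound $k\leq 2n-4$ established above, every instance arising in the recursion has size $\mathcal{O}(n)$, and $T_{p}$ being monotone we have $T_{p}(n')\leq T_{p}(n)$ for each leaf instance of order $n'\leq n$.

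Then I would bound the work per node. At an internal node the algorithm picks a crossing ($\mathcal{O}(1)$), calls $\contract$ twice ($\mathcal{O}(k)$ each, as argued in the text), constructs $G/wy$, $G/yz$ and $G-e_{wz}$ ($\mathcal{O}(n+m)=\mathcal{O}(n)$ each), evaluates the three cut values $\wert(\CutExt{_1,G_1}),\wert(\CutExt{_2,G_2}),\wert(\CutExt{_3,G_3})$ by a single sweep over the edges ($\mathcal{O}(m)=\mathcal{O}(n)$ each), takes an $\argmax$ over three numbers, and performs one $\Split$ ($\mathcal{O}(n)$); using $k\leq 2n-4$ this is $\mathcal{O}(n)$. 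At a leaf the only cost is one invocation of the planar \MCA, i.e. $T_{p}(n')\leq T_{p}(n)$. Summing over the whole tree yields $\mathcal{O}(3^{k}\cdot n)+\mathcal{O}(3^{k}\cdot T_{p}(n))$, and since the planar algorithm must at least read its input, $T_{p}(n)=\Omega(n)$, so the first term is absorbed and the running time is $\mathcal{O}(3^{k}\cdot T_{p}(n))$. Substituting $T_{p}(n)=\mathcal{O}(n^{3/2}\log n)$ from~\cite{LiersPardella09,ShihWuKuo90} gives $\mathcal{O}(3^{k}\cdot n^{3/2}\log n)$.

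The individual estimates — that $\contract$, building the three modified graphs, evaluating the cut values, and $\Split$ are all linear — are routine given the data structures and the $\mathcal{O}(n)$ edge bound for $1$-planar graphs. The only point that genuinely needs care is the observation that neither the node count nor the edge count ever grows along the recursion, so that every intermediate instance (planar or not) stays of size $\mathcal{O}(n)$; this, together with the strict decrease of the crossing number, is exactly what makes the whole analysis collapse to $\mathcal{O}(3^{k}\cdot T_{p}(n))$.
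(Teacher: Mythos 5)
Your argument is correct and is essentially the paper's own analysis: the paper writes the recurrence $T(k,n)=3\,T(k-1,n)+\mathcal{O}(k+n+m)$ and unrolls it by induction, while you sum the same per-node costs ($\mathcal{O}(n)$ at internal nodes, $T_p(n)$ at leaves) directly over the recursion tree of depth at most $k$ and branching factor $3$ --- the two presentations are interchangeable. Your explicit remarks that instance sizes never grow along the recursion and that $T_p(n)=\Omega(n)$ absorbs the $\mathcal{O}(3^k n)$ term are points the paper handles implicitly via the bounds $m\leq 4n-8$ and $k\leq 2n-4$, so nothing is missing.
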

\begin{proof}
	Let $T(k,n)$ be the running time of Algorithm \ref{algo:MaxCutkfpl} on an embedded 1-planar graph $G$ with $n$ nodes and $k$ crossings. 
	If $G$ is planar, our algorithm uses a planar \MCA, resulting in $T(0,n) = T_{p}(n)$.
	$\contract$ has a linear running time of $\mathcal{O}(k)$, since every crossing in $\I$ needs to be checked only once. 
	The contractions of $G/wy$ and $G/yz$ take time $\mathcal{O}(n + m)$ and the edge removal $G - e_{wz}$ takes time $\mathcal{O}(m)$. Reversing a contraction on a set of nodes $S_i$ with $\Split$ takes $\vert S_i\vert$ steps, resulting in a running time of $\mathcal{O}(n)$.
	Hence the recursive running time is: 
	$$T(k,n) = 3 \cdot T(k-1,n) + \mathcal{O}(k + n + m)$$
	An induction proof shows that $T(k,n) = 3^k \cdot (T(0,n) + \sum_{i=1}^{k} 3^{-i} \cdot \mathcal{O}(i + n + m))$.
	Since $m$ is bounded by $4n-8$ \cite{PachT97}, $k$ is bounded by $2n-4$ (see above), $i$ is bounded by $k$ and the geometric sum equals a value between 0 and 1, the overall running time is $\mathcal{O}(3^k \cdot (T_{p}(n) + n))$.
	Liers and Pardella \cite{LiersPardella09} or Shih et al.~\cite{ShihWuKuo90} describe a planar \MCA with a running time of $\mathcal{O}(n^{3/2} \cdot \log n)$, resulting in a concrete  running time of $\mathcal{O}(3^k \cdot n^{3/2}  \log n)$ for our algorithm.
\qed
\end{proof}

If the number of crossings $k$ in a 1-planar embedding is fixed, the running time of Algorithm \ref{algo:MaxCutkfpl} is polynomial. However, in an arbitrary 1-planar embedding, $k$ is not fixed and the factor $3^k$ leads to an exponential worst case running time. But we can show that our algorithm is \textit{fixed-parameter tractable} with parameter $k$.
Since its running time can be split into an exponential part, depending only on the parameter $k$, ($3^k$) and a polynomial part in the size of the input graph ($T_{p}(n) + n$), the algorithm is \textit{fixed-parameter tractable} with parameter $k$.
\begin{theorem}
	\label{theo:fpt}
	The \MCP on embedded 1-planar graphs is \textit{fixed-parameter tractable} parameterized by the crossing number $k$ of the given 1-planar embedding.
\end{theorem}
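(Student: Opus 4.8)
The plan is to obtain this statement as an immediate corollary of the two preceding theorems, so the "proof" is really a bookkeeping step matching definitions. First I would recall what fixed-parameter tractability means: a parameterized problem is \emph{fixed-parameter tractable} with parameter $k$ if it can be solved by an algorithm running in time $f(k)\cdot p(n)$, where $n$ measures the input size, $f$ is a computable function depending only on $k$, and $p$ is a polynomial. Hence it suffices to exhibit one algorithm of this form that solves the \MCP on embedded 1-planar graphs.

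The algorithm in question is Algorithm~\ref{algo:MaxCutkfpl}. Its correctness — that it returns a set defining a maximum cut — is precisely Theorem~\ref{theo:opt}, so nothing new is required there. Its running time is $\mathcal{O}(3^k\cdot n^{3/2}\log n)$ by Theorem~\ref{theo:time_calc_mc_kap}. I would then simply observe that this bound already has the required shape: take $f(k)=3^k$, which is computable and depends only on $k$, and $p(n)=n^{3/2}\log n$, which is polynomially bounded in $n$ (and, if an integer-degree polynomial is preferred, $n^{3/2}\log n\le n^{2}$ for $n\ge 2$). This matches the definition verbatim and yields the claim.

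The only point worth a sentence of care — and the closest thing to an obstacle — is the input convention. Computing a 1-planar embedding together with its set of crossings is NP-hard in general, so if the embedding had to be produced by the algorithm, the FPT claim would be endangered. This is exactly why the statement, like Theorems~\ref{theo:opt} and~\ref{theo:time_calc_mc_kap}, is phrased for \emph{embedded} 1-planar graphs: the crossing set $X$ is part of the input, the parameter $k=|X|$ is read off the input directly, and Algorithm~\ref{algo:MaxCutkfpl} never reconstructs an embedding — the embedding is used only in the correctness argument, not in the computation. With this understood there is no remaining work, and the theorem follows by restating Theorem~\ref{theo:time_calc_mc_kap} in the vocabulary of parameterized complexity.
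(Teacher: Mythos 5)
Your proposal is correct and matches the paper's own argument: the authors likewise obtain Theorem~\ref{theo:fpt} directly by combining the correctness result (Theorem~\ref{theo:opt}) with the running-time bound of Theorem~\ref{theo:time_calc_mc_kap}, observing that $\mathcal{O}(3^k \cdot n^{3/2}\log n)$ splits into a factor depending only on $k$ and a polynomial in $n$. Your additional remark about the embedding being part of the input (so that no NP-hard 1-planarity test is needed) is a point the paper also makes, just in the surrounding discussion rather than inside the proof.
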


\section{Conclusion and open problems}
\label{se:conclusion}
We have presented a polynomial time algorithm for computing a \MC~in a 1-planar graph provided with a 1-planar embedding with a constant number of crossings. This shows that the \MCP on embedded 1-planar graphs is in the class FPT. 

The question arises if our approach can be extended to general graphs with up to $k$ crossings per edge, so called \emph{$k$-planar graphs}.
Our approach is based on the fact that node contractions and edge deletions decrease the number of crossings (see Lemma \ref{lem:k-1-al-pl}). 
Figure \ref{fig:k33} shows that this is no longer true if an edge is crossed more than once.
In this case, there are crossings that do not have direct half edges connecting it to its endpoints like, e.g.,  the crossing $(ad,cf)$ in Figure \ref{fig:k33}. If we contract $d$ and $f$, we get plenty of new crossings in the new graph $G/df$.
We are currently working to generalize our approach to embedded $k$-planar graphs.

\begin{figure}[h]
	\centering
	\subcaptionbox{$G$
		\label{fig:4-pl}}[.49\linewidth]
	{\resizebox{.4\linewidth}{!}{\begin{tikzpicture}
\begin{scope} [thick] 
		\node[draw,circle,minimum size=0.45cm, inner sep=0ex] (a) at (0   ,0  ) {\small $a$};
		\node[draw,circle,minimum size=0.15cm,fill,gray, inner sep=0ex] (af1) at (-0.35 ,1.25) {};
		\node (afk) at (0   ,1.25) {\small $\ldots$};
		\node (afk) at (0 ,1) {\small $l$};
		\node[draw,circle,minimum size=0.15cm,fill,gray, inner sep=0ex] (afn) at (0.35 ,1.25) {};
		\node[draw,circle,minimum size=0.45cm, inner sep=0ex,green!60!black] (f) at (0   ,2.5) {\small $f$};
		
		\node[draw,circle,minimum size=0.45cm, inner sep=0ex] (b) at (2.25,0  ) {\small $b$};
		\node[draw,circle,minimum size=0.45cm, inner sep=0ex] (e) at (2.25,2.5) {\small $e$};
		
		\node[draw,circle,minimum size=0.45cm, inner sep=0ex] (c) at (4.5 ,0  ) {\small $c$};		
		\node[draw,circle,minimum size=0.15cm,fill,gray, inner sep=0ex] (cd1) at (4.15 ,1.25) {};
		\node (cdk) at (4.5 ,1.25) {\small $\ldots$};
		\node (cdk) at (4.5 ,1) {\small $l$};
		\node[draw,circle,minimum size=0.15cm,fill,gray, inner sep=0ex] (cdn) at (4.85 ,1.25) {};
		\node[draw,circle,minimum size=0.45cm, inner sep=0ex,green!60!black] (d) at (4.5 ,2.5) {\small $d$};

%		\node[draw,circle,minimum size=0.15cm,fill,gray, inner sep=0ex] (df1) at (2.25 ,3.1) {};
%		\node (dfk) at (2.25 ,3.4) {\small $\ldots$};
%		\node[draw,circle,minimum size=0.15cm,fill,gray, inner sep=0ex] (dfn) at (2.25 ,3.7) {};
		
		\node[draw,circle,minimum size=0.15cm,fill,gray, inner sep=0ex] (ac1) at (2.25 ,-0.6) {};
%		\node (ack) at (2.25 ,-0.9) {\small $\ldots$};
		\node[draw,circle,minimum size=0.15cm,fill,gray, inner sep=0ex] (acn) at (2.25 ,-1.2) {};
		\node (ack) at (2,-0.8) {\small $l$};

	\draw (a) -- (d);
	\draw (a) -- (e);
	\draw (b) -- (d);
	\draw (b) -- (f);
	\draw (c) -- (e);
	\draw (c) -- (f);
	
%	\draw (a) -- (f);
	\draw[-,red] (b) to[bend left] (e);
%	\draw (c) -- (d);
	 	
	\draw (e) -- (f);
	\draw (e) -- (d);
	\draw (b) -- (a);
	\draw (b) -- (c);
	
%	\draw (d) -- (df1) -- (f);
%	\draw (d) -- (dfn) -- (f);
	
	\draw (a) -- (af1) -- (f);
	\draw (a) -- (afn) -- (f);
	
	\draw (c) -- (cd1) -- (d);
	\draw (c) -- (cdn) -- (d);
	
	\draw (a) -- (ac1) -- (c);
	\draw (a) -- (acn) -- (c);
	
	\draw[-] (b) to (ac1);
	\draw[-,dotted] (ac1) to (acn);
	
	\draw[-,red] (e) to[out=45,in=135] (5,3) to[out=-45,in=45] (5,-0.5) to[out=-135,in = 0] (acn);
	
	\node[draw,circle,minimum size=0.4cm, inner sep=0ex,green!60!black, dotted] (f) at (2.25,1.25) {};
\end{scope}
\end{tikzpicture}}}
	\subcaptionbox{$G/df$	\label{fig:4-pl_contr}}[.49\linewidth]
	{\resizebox{.4\linewidth}{!}{\begin{tikzpicture}
\begin{scope} [thick] 
		\node[draw,circle,minimum size=0.45cm, inner sep=0ex] (a) at (0   ,0  ) {\small $a$};
		\node[draw,circle,minimum size=0.15cm,fill,gray, inner sep=0ex] (af1) at (0.85 ,0.85) {};
		\node (afk) at (1.2 ,0.625) {\small $\ldots$};
		\node (afk) at (0.85 ,0.5) {\small $l$};
		\node[draw,circle,minimum size=0.15cm,fill,gray, inner sep=0ex] (afn) at (1.4 ,0.35) {};
		\node[draw,circle,minimum size=0.45cm, inner sep=0ex,green!60!black] (vdf) at (2.25,1.25) {\small $v_{df}$};
		
		\node[draw,circle,minimum size=0.45cm, inner sep=0ex] (b) at (2.25,0  ) {\small $b$};
		\node[draw,circle,minimum size=0.45cm, inner sep=0ex] (e) at (2.25,2.5) {\small $e$};
		
		\node[draw,circle,minimum size=0.45cm, inner sep=0ex] (c) at (4.5 ,0  ) {\small $c$};		
		\node[draw,circle,minimum size=0.15cm,fill,gray, inner sep=0ex] (cd1) at (3.65 ,0.85) {};
		\node (cdk) at (3.375 ,0.625) {\small $\ldots$};
		\node (afk) at (3.7 ,0.5) {\small $l$};
		\node[draw,circle,minimum size=0.15cm,fill,gray, inner sep=0ex] (cdn) at (3.1 ,0.4) {};

%		\node[draw,circle,minimum size=0.15cm,fill,gray, inner sep=0ex] (df1) at (2.8 ,2.1) {};
%		\node (dfk) at (2.95 ,2.4) {\small $\ldots$};
%		\node[draw,circle,minimum size=0.15cm,fill,gray, inner sep=0ex] (dfn) at (3.3 ,2.8) {};
		
		\node[draw,circle,minimum size=0.15cm,fill,gray, inner sep=0ex] (ac1) at (2.25 ,-0.6) {};
		\node (ack) at (2,-0.8) {\small $l$};
		\node[draw,circle,minimum size=0.15cm,fill,gray, inner sep=0ex] (acn) at (2.25 ,-1.2) {};

	\draw[-] (a) to[bend left, looseness=1.4] (vdf);
	\draw[-] (a) to[bend left] (e);
	\draw[-,dashed, blue, thick] (b) to (vdf);
%	\draw (b) -- (vdf);
	\draw[-] (c) to[bend right] (e);
	\draw[-] (c)  to[bend right, looseness=1.4] (vdf);
	
%	\draw (a) -- (vdf);
	\draw[-,red] (b) to[bend left] (e);
%	\draw (c) -- (vdf);
	 	
	\draw[-,dashed, blue, thick] (e) to (vdf);
%	\draw (e) -- (vdf);
	\draw (b) -- (a);
	\draw (b) -- (c);
	
%	\draw[-] (vdf) to[bend left] (df1.center) to[bend left] (vdf);
%	\draw[-] (vdf) to[bend right] (dfn.center) to[bend right] (vdf);
	
	\draw (a) -- (af1) -- (vdf);
	\draw (a) -- (afn) -- (vdf);
	
	\draw[-] (c) to (cd1) to (vdf);
	\draw[-] (c) to (cdn) to (vdf);
	
	\draw (a) -- (ac1) -- (c);
	\draw (a) -- (acn) -- (c);
	
	\draw[-] (b) to (ac1);
	\draw[-,dotted] (ac1) to (acn);
	
%	\draw[-,red] (e) to[out=0,in=90] (5,0) to[out=-90,in = 0] (acn);
	\draw[-,red] (e) to[out=45,in=135] (5,3) to[out=-45,in=45] (5,-0.5) to[out=-135,in = 0] (acn);
	
%	\node[draw,circle,minimum size=0.15cm,fill,gray, inner sep=0ex] (df1) at (2.8 ,2.1) {};
%	\node (dfk) at (2.95 ,2.4) {\small $\ldots$};
%	\node[draw,circle,minimum size=0.15cm,fill,gray, inner sep=0ex] (dfn) at (3.3 ,2.8) {};
	
%	\node[draw,circle,minimum size=0.4cm, inner sep=0ex,red, dotted] (vdf) at (2.25,1.25) {};
\end{scope}
\end{tikzpicture}}}
	\caption{A 4-planar graph where the contraction of the nodes $d$ and $f$ leads to $\mathcal{O}(l)$ new crossings. The two edges that generate the new crossings are drawn in red.
		Between $a$ and $f$ (resp. $c$ and $d$) in $G$ are $l$ independent paths. Beneath $b$ there are $l$ paths between $a$ and $c$ that are pairwise connected and therefore have a specific order. The highest path contains a node connected to $b$ and the lowest path contains a node connected to $e$. No matter where $e$ is drawn in $G/df$, one of the two red edges crosses at least $l-1$ other edges.}
	\label{fig:k33}
\end{figure}
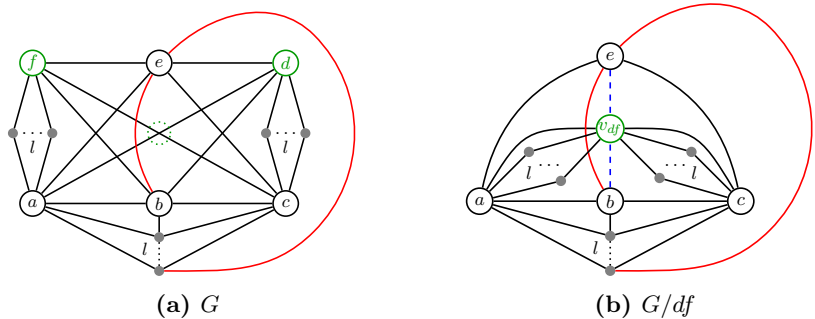

Another interesting question would be to drop the assumption that we are given a 1-planar embedding. 
Note that our algorithm does not need such an embedding as input, it only needs to get a list of edge crossings that must correspond to a 1-planar embedding. 
However, for our correctness analysis it is important to have a 1-planar embedding of the graph.
%A possible approach to overcome this restriction we will consider in the future would be to combine an FPT drawing algorithm for 1-planar graphs with our approach (e.g., one of the algorithms in \cite{BannisterCE13}).

%\newpage
%\section*{Acknowledgements}
%We thank ...

% Literaturverzeichnis
\bibliographystyle{plain}
\bibliography{literatur}

\end{document}